\title{Changing Observations in Epistemic Temporal Logic}
\titlerunning{Changing Observations in Epistemic Temporal Logic}
\author{Aur\`ele Barri\`ere}
{ENS Rennes, France}
{aurele.barriere@ens-rennes.fr}
{}
{}
\author{Bastien Maubert}
{University of Naples ``Federico II'', Naples, Italy}
{bastien.maubert@gmail.com}
{}
{This project has received funding from the European Union's Horizon 2020 research
  and innovation programme under the Marie Sklodowska-Curie grant agreement No 709188.}
\author{Aniello Murano}
{University of Naples ``Federico II'', Naples, Italy}
{murano@na.infn.it}
{}
{}
\author{Sasha Rubin}
{University of Naples ``Federico II'', Naples, Italy}
{sasha.rubin@unina.it}
{}
{}
\authorrunning{A. Barri\`ere, B. Maubert, N. Murano and S. Rubin} 
\subjclass{Logic and Verification}
\keywords{Epistemic logic, 
Temporal logic, 
Model checking} 
\newcommand\UElogo{%
\begin{tikzpicture}[remember picture,overlay]
\node[anchor=south,yshift=4.2cm,xshift=2cm] at (current page.south) {\includegraphics[height=2.5em]{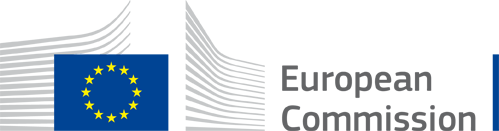}};
\end{tikzpicture}%
}
\begin{document}


\maketitle
\UElogo

\begin{abstract}
We study  dynamic changes of agents' observational power
in  logics of knowledge and time. We  consider \CTLsK, the extension
of \CTLs with knowledge operators, and enrich it with a new operator
that models a change in an agent's  way of observing the system.
We extend the classic semantics of knowledge for perfect-recall agents
to account for changes of observation, and we show that this new operator strictly increases
the expressivity of \CTLsK. We  reduce the model-checking problem for
our logic to
that for \CTLsK, which is known to be decidable. This provides a
solution to the model-checking problem for our logic, but its complexity
is not optimal. Indeed we  provide a direct
decision procedure with better complexity. 


\end{abstract}


\section{Introduction}
\label{sec:intro}
In multi-agent systems,  agents usually have only partial
information about the state of the
system~\cite{van2003cooperation}. This has led to the development of
epistemic logics, often combined with temporal logics, for describing
and reasoning about how distributed systems and agents' knowledge
evolve over time. Such formalisms have been applied to the modelling
and analysis of, \eg, distributed
protocols~\cite{DBLP:conf/tark/LadnerR86,RAK}, information flow and
cryptographic
protocols~\cite{DBLP:conf/csfw/MeydenS04,DBLP:journals/jcs/HalpernO05} and
knowledge-based programs~\cite{van1998synthesis}.


In these frameworks, an agent's view of a particular state of the
system is given by an {observation} of that state. In all the
cited settings, an agent's observation of a given state does not 
change over time. In other words, these frameworks have no primitive
for reasoning about agents whose observation power can change. \bmchanged{Because
this phenomenon occurs in real scenarios, for instance when a
user of a system is granted access to previously hidden data, we
propose here to tackle this problem.}
Precisely, 
we extend classic epistemic temporal logics with a new
unary operator, $\D{o}$, that represents changes of observation power, and is read ``the agent changes her observation power to
$o$''. 
For instance, the formula $\D{o_1}\A\F(\D{o_2}(\K p\vee\K\neg p))$
expresses that ``For an agent with initial observation power $o_1$,
in all possible futures there exists a point where, if the agent  updates
her observation power to $o_2$, she  learns whether or not the
proposition $p$ holds''. If in this example $o_1$ and $o_2$ represent
different ``security levels'' and $p$ is sensitive information, then
the formula expresses a possible avenue for attack.
The logics and model-checking procedures that we present in this paper
allow the expression and evaluation of such properties. 
Another motivation for studying such logics comes from the recently introduced
 Strategy Logic with Imperfect
Information~\cite{DBLP:conf/lics/BerthonMMRV17}, an extension of
Strategy Logic~\cite{DBLP:journals/tocl/MogaveroMPV14} in which
agents can dynamically change observation power when 
changing strategies.


There is a rich history of epistemic logic in AI, including the static
and temporal settings~\cite{RAK}, the dynamic setting~\cite{ditmarsch2007dynamic} as well as the
strategic setting~\cite{van2003cooperation}. 
%
The most common logics of knowledge and time are \CTLK, \LTLK and
\CTLsK, which extend the classic temporal logics \CTL, \LTL and \CTLs
with epistemic operators. 
These logics have been studied for two main recall abilities: 
no memory or  perfect recall.   
\bmchanged{For memoryless agents, adapting the semantics of these logics  to
include the observation-change operator is straightforward.
Model-checking algorithms also can be easily adapted 
by keeping updated in the procedure 
each agent's current observation, and this
information is of linear size.
The resulting logics thus have a \PSPACE-complete model-checking
problem, as  \LTLK, \CTLK and \CTLsK
do}~\cite{raimondi2005complexity,DBLP:conf/atal/KongL17}. 

The case of agents with perfect recall, which we study in this work,
is more delicate. The model-checking problem for \LTLK and \CTLsK is
nonelementary
decidable~\cite{DBLP:conf/fsttcs/MeydenS99,BOZZELLI201580,DBLP:conf/atal/Aucher14},
with \kEXPTIME
upper-bound for formulas with at most $k$ nested knowledge operators. 
The same upper-bounds are known for \CTLK~\cite{Dima2009}.
In this work we show that, as for the memoryless semantics, the
introduction of observation changes in epistemic temporal logics with
perfect recall  does not increase the complexity
of the model-checking problem.

We extend \CTLsK (which subsumes \CTLK and \LTLK) with
observation-change operators $\D{o}$, starting with the 
single-agent case, and we solve its model-checking problem.  \bmchanged{To do so
we define an alternative
semantics which, unlike the natural one, is
 based on a bounded amount of information. Once the two semantics proven
 to be equivalent, designing a
 model-checking algorithm is almost straightforward.} 
We then extend the logic to the multi-agent case,  introducing
 operators $\D{o}_a$ for each agent $a$,
 and we extend our approach to solve its 
model-checking problem. Next, we study the expressivity of our logic,
showing that the observation-change operator increases
expressivity. We finally provide a reduction to \CTLsK which shows how
to remove observation-change operators, at the cost of a blow-up in
the size of the model. Our direct model-checking procedure is shown to have a
better complexity than going through this reduction and using known
 model-checking algorithms   for \ctlsk.




\section{\ctlskd}
\label{sec:ctlskd}

In this section we define the logic \ctlskd, which  extends \CTLsK.
We first
 study the case where there is only one agent (and thus only one
knowledge operator). We will extend to the multi-agent setting in Section~\ref{sec:multi}.

\subsection{Notation}
A \emph{finite} (resp. \emph{infinite}) \emph{word} over some alphabet $\Sigma$ is an element
of $\Sigma^{*}$ (resp. $\Sigma^{\omega}$). 
The \emph{length} of a finite word $w=w_{0}\ldots
w_{n}$ is $|w|\egdef n+1$, and we let $\last(w)\egdef w_{n}$.
Given a finite (resp. infinite) word $w$ and $0 \leq i \leq |w|$ (resp. $i\in\setn$), we let $w_{i}$ be the
letter at position $i$ in $w$, $w_{\leq i}$ is the prefix of $w$ that
 ends at position $i$,
and $w_{\geq i}$ is the suffix  that starts
 at position $i$.
 We write $w\pref w'$ if $w$ is a prefix of $w'$. 

\subsection{Syntax}


We fix  a countably infinite set of atomic
propositions,  $\AP$. We also let $\Obs$ be a finite set of \textit{observations},
that  represent  possible observational powers of the agent.
Note that in this work,  
``observation'' does not refer to a punctual
observation of a  system's state, but rather a way of observing
the system, or ``observational power'' of an agent. 

As for state and path
formulas in \CTLs, we distinguish between \textit{history formulas} and \textit{path
  formulas}. We say history formulas
instead of state formulas because, considering agents with \emph{perfect recall} of
the past, the truth of epistemic
formulas depends not only on the current state, but also on the history
before reaching this state. 

\begin{definition}[Syntax]
The sets of history formulas $\phi$ and path formulas $\psi$ are
defined by the following grammar:
\[
\begin{array}{ccl}
  \phi & ::= & p ~|~ \neg \varphi ~|~ \varphi\wedge\varphi ~|~ \A\psi
               ~|~ \K\varphi ~|~ \D{o}\varphi\\
  \psi & ::= & \varphi ~|~ \neg\psi ~|~ \psi\wedge\psi ~|~ \X\psi ~|~ \psi\U\psi,
\end{array}
\]
where $p\in\AP$ and $o\in\Obs$. 
\end{definition}

We call \ctlskd formulas  all
history formulas so defined.
Operators $\X$ and $\U$ are the
classic \textit{next} and \textit{until} operators of temporal logics, and
$\A$ is the universal path quantifier from branching-time temporal
logics. 
$\K$ is the knowledge operator from
epistemic logics, and
 $\K\varphi$ reads as ``the agent knows that $\varphi$ is true''.
Our new \emph{observation change} operator, $\D{o}$,
reads as ``the agent now observes the system with observation $o$''. 

As usual, we define $\top=p\vee\neg p$, $\phi\vee\phi'=\neg(\phi\wedge\neg\phi)$,
$\phi\impl\phi' = \neg \phi \vee \phi'$, as well as the temporal operators \emph{finally} ($\F$) and
\emph{always} ($\G$): $\F\phi = \top\U\phi$, and $\G\phi=\neg\F\neg\phi$.

\subsection{Semantics}

The models  of \ctlskd are  Kripke
structures equipped with one observation relation $\eqstate{o}$ on states for
each observation $o$.  

\begin{definition}[Models]
  \label{def-models}
A \emph{Kripke
structure with observations} is given as a structure
$M=(\APf,S,T,V,\{\eqstate{o}\}_{o\in\Obs},\sinit,\oinit)$, where
\begin{itemize}
\item  $\APf\subset \AP$  is a finite subset of atomic propositions,
\item $S$ is a set of states,
\item $T\subseteq S\times S$ is a left-total\footnote{\ie, for every $s\in
  S$ there exists $s'\in S$ such that $s T s'$. This cosmetic
  restriction is made to avoid having to deal with finite runs ending
  in deadlocks.
} transition relation,
\item $V:S\rightarrow 2^{\APf}$ is a valuation function, 
\item $\eqstate{o}\subseteq S\times S$ is an equivalence relation, for each
  $o\in\Obs$,
  \item $\sinit\subseteq S$ is an initial state, and
\item $\oinit\in\Obs$ is the initial observation.
\end{itemize}  
\end{definition}

A \textit{path} is an infinite sequence of states
$\pi=s_0 s_1\dots$ such
that  for all $i\geq 0$, $s_i T s_{i+1}$, and a \textit{history}
$h$ is a finite prefix of a path.
For $I\subseteq S$, we write
$T(I)=\{s' \mid \exists s\in I \mbox{ s.t. }s T s'\}$ for the set of
successors of states in $I$. Finally, for $o\in\Obs$ and $s\in S$, we 
let $\eqc{s}=\{s' \mid s \eqstate{o} s'\}$ be the equivalence class
of $s$ for relation $\eqstate{o}$.

\begin{remark}
  \label{rem-model}
  We model agents' information via indistinguishability
  relations $\eqstate{o}$, where $s\eqstate{o}s'$ means that $s$ and
  $s'$ are indistinguishable for an agent who has observation power $o$.
  Other approaches exist. 
  One is via  \emph{observation functions}
  (see, \eg, \cite{DBLP:conf/fsttcs/MeydenS99}), that
  map states to atomic observations, and where two states are
  indistinguishable for an observation function if they have the same
  image. Another consists in seeing states as tuples of \emph{local
    states}, one for each agent, two global states being indistinguishable
  for an agent if her local state is the same in both (see, \eg,
  \cite{DBLP:conf/atal/KongL17}).
  All these formalisms are essentially equivalent with respect to
  epistemic temporal logics~\cite{pacuit2007some}. 
  In these alternative formalisms, observation change would correspond
  to, respectively, changing
  observation function, and changing local states in global
  states. Note that the formalism via local states, because
  indistinguishability is hard-coded in states of the system, is less
  adapted to
  the modelling of observation change.
\end{remark}

\vspace{1ex}
\head{Observation records}
To define which histories the agent cannot distinguish, we need to
keep track of how she observed the system at each point in time. To do
so, we record each observation change as a pair $(o,n)$, where $o$ is
the new observation and $n$ is the time when this change occurs.

\begin{definition}
An  \emph{observation record} is a finite word over $\Obs\times\setn$.
\end{definition}

Note that observation records are meant to represent changes of
observational ability, and thus they do not contain the initial
observation (which is given in the model). We write $\eobsrec$ for
the empty observation record.

\begin{example}
  \label{ex-1}
Consider a  model $M$ with initial observation $\oinit$, a history
$h=s_0\ldots s_4$ and an observation record
$r=(o_1,0)\cdot(o_2,3)\cdot(o_3,3)$. The agent first observes state $s_0$
with observation $\oinit$. The observation record shows that at time
0, thus before the first transition, the agent changed for observation
$o_1$. She then observed state $s_0$ again, but this time with
observation $o_1$. Then the system goes through states $s_1$ and
$s_2$ and reaches $s_3$, all of which she observes  with observation $o_1$. 
At time 3, the agent changes to observation $o_2$, and thus observes
state $s_3$ again, but this time with  observation $o_2$, and finally
she switches to observation $o_3$ and thus observes $s_3$ once more,
with observation $o_3$. Finally, the system goes to state $s_4$, which
the agent observes with observation $o_3$.
\end{example}

We write $\append{(o,n)}$ for the observation record obtained by appending
$(o,n)$ to the observation record $r$, and
 $\rn{n}$ for the record consisting of all pairs  $(o,m)$ in $r$
 such that  $m = n$. We say that an observation record $r$ \emph{stops
   at $n$} if $\rn{m}$ is empty for all $m>n$, and $r$ \emph{stops
   at history $h$} if it stops at $|h|-1$. Unless otherwise
 specified, when we consider an
 observation record $r$ together with a history $h$, it is understood that $r$ stops at $h$.

 \vspace{1ex}
\head{Observations at time $n$} We let  $\obslist(r,n)$ be the list
of observations used by the agent  at time $n$. It consists of the
observation that the agent has when the $n$-th transition is taken,
plus those of observation changes that occur before the next
transition. It is defined by
induction on $n$:
\begin{flalign*}
\obslist(r,0)& =  \oinit \cdot o_1 \cdot \ldots \cdot o_k,& \\    
& & \mbox{if }  \rn{0}=(o_1,0)\cdot\ldots\cdot(o_k,0), \mbox{ and}\\[2pt]
 \obslist(r,n+1)&= \last(\obslist(r,n))\cdot o_1 \cdot \ldots \cdot
  o_k,& \\ 
 & & \mbox{if } \rn{n+1}=(o_1,n+1) \cdot \ldots \cdot (o_k,n+1).  
\end{flalign*}

Observe that $\obslist(r,n)$ is never empty: if no observation change
occurs at time $n$, $\obslist(r,n)$ only contains the last observation
taken by the agent. If $r$ is empty, the latter is the initial
observation $o_\init$ defined by the model.


\begin{example}
  \label{ex-2}
Consider record
$r=(o_1,0)\cdot(o_2,3)\cdot(o_3,3)$. We have that
 $\obslist(r,0)= \oinit \cdot o_1$,
  $\obslist(r,1)=\obslist(r,2)=o_1$,
   $\obslist(r,3)=o_1 \cdot o_2\cdot o_3$, and
  $\obslist(r,4)=o_3$.
\end{example}

\head{Synchronous perfect recall}
The usual definition of synchronous perfect recall states that for
an agent with observation $o$, histories $h$ and $h'$ are
indistinguishable if they have the same length and are point-wise
indistinguishable, \ie, $|h|=|h'|$ and for
each $i<|h|$, $h_i\eqstate{o} h'_i$.

We adapt this definition to   changing observations: two
histories are indistinguishable if, at
each point in time, the states are indistinguishable for all
observations used at that time. 


\begin{definition}[Dynamic synchronous perfect recall]
Given an observation record $r$, two histories $h$ and $h'$ are
equivalent, written 
  $h\eqh{r}h'$, if
\[|h|=|h'|~\textit{and}~\forall i< |h|,~\forall o\in
\obslist(r,i),~ h_i\eqstate{o} h'_i.\]
\end{definition}

We now define the natural semantics of \ctlskd.

\begin{definition}[Natural semantics]
  \label{def-nat-semantics}
 Fix a model $M$. A history formula $\phi$
is evaluated in a history $h$ and an
observation record $r$.  A path formula $\psi$ is interpreted on a
run $\pi$, a point in time $n\in\setn$
 and an observation record. The semantics is defined by induction on
 formulas: 
 \[\begin{array}{lcl}
  h,r \models p & \mbox{ if } & p\in V(\mathit{last}(h))\\[1pt]
   h,r \models \neg\varphi & \mbox{ if } & h,r\not\models\varphi\\[1pt]
   h,r \models \varphi_1\wedge\varphi_2 & \mbox{ if } &h,r\models\varphi_1~\text{and}~ h,r\models\varphi_2\\[1pt]
     h,r \models \A\psi  & \mbox{ if } & \forall\pi \mbox{
                                         s.t. }h\pref \pi,  \;\pi,|h|-1,r\models\psi\\[1pt]
  h,r \models\K\varphi  & \mbox{ if } & \forall h' \mbox{ s.t. }h'\eqh{r}h,\; h',r\models\varphi\\[1pt]
  h,r \models \D{o}\varphi & \mbox{ if } & h,\append{(o,|h|-1)}\models\varphi\\[1pt]
\pi,n,r\models\varphi & \mbox{ if } & \pi_{\leq n},r\models\varphi\\[1pt]
   \pi,n,r\models\neg\psi & \mbox{ if } & \pi,n,r\not\models\psi\\[1pt]
   \pi,n,r\models \psi_1\wedge\psi_2 & \mbox{ if }
                                       &\pi,n,r\models\psi_1 \mbox{
                                         and }\pi,n,r\models\psi_2\\[1pt]
  \pi,n,r\models\X\psi & \mbox{ if } &  \pi,(n+1),r\models\psi\\[1pt]
  \pi,n,r\models \psi_1\U\psi_2 & \mbox{ if } & 
                                                  \exists m\geq n
                                                \mbox{ s.t. }
                                                \pi,m,r\models\psi_2
                                                \mbox{ and}\\[1pt]
                        & & \forall k \mbox{ s.t. } n\leq k <m,\; \pi,k,r\models\psi_1 
\end{array}\]
\end{definition}
We say that a model $M$ with initial state $\sinit$ \emph{satisfies} a
\ctlskd formula $\phi$, written $M\models\phi$, if
$\sinit,\eobsrec\models\varphi$.

We first discuss a subtlety of our semantics, which is that an agent
can observe the same state consecutively with several
observations. 

\begin{remark}
  Consider the formula $\D{o'} \varphi$ and history $h$. By
  definition, $h,r \models \D{o'} \varphi$ iff
  $h,r \cdot (o',|h|-1) \models \varphi$.  Note that although the
  history did not change (it is still $h$), the observation record is
  extended by the observation $o'$ at time $|h|-1$, with the following
  consequence.  Suppose that $ol(r,|h|-1) = o$. After switching to
  $o'$, the agent considers possible all histories $h'$ such that i)
  $h \sim_r h'$ (they were considered possible before the change of
  observation) and ii) $last(h) \sim_{o'} last(h')$ (they are still
  considered possible after the change of observation). Informally
  this means that by changing observation from $o$ to $o'$, the
  agent's information is \emph{further refined} by $o'$,  and it is as
  though the agent at time $|h|-1$ observed the system with 
  observation $o \cap o'$. At later times, her observation is simply
  $o'$, until another change of observation occurs.   
\end{remark}

\subsection{Examples of observation change}

\begin{example}
  A logic of accumulative knowledge (and resource bounds) is introduced
  in~\cite{DBLP:journals/logcom/JamrogaT18}. It studies agents that
  can perform successive \emph{observations} to improve their knowledge of the
  situation, each observation refining their current view of the
  world. In their framework, an observation models a yes/no
  question about the current situation; if the answer is `yes', the
  agent can eliminate all possible worlds for which the answer is
  `no', and vice versa. Formally, an observation is a binary partition
  of the  possible
  states, and the agent learns in which partition is the current state.
  Such observations are particular cases of our
  models' indistinguishability relations, and the semantics of an
  agent performing an observation $o$ is
  exactly captured by the semantics of our operator
  $\D{o}$. Similarly, performing sequence of observations $o_1\ldots
  o_n$ corresponds to the successive application of operators $\D{o_1}\ldots\D{o_n}$.
  As an example, \cite{DBLP:journals/logcom/JamrogaT18} shows how to
  model a medical diagnosis in which the disease is narrowed down
  by performing a series of successive tests.
\end{example}

Our logic is incomparable with the one discussed in the previous
example, in which observations have a cost, but no temporal aspect is
considered. In this work we do not consider costs, but we study the
evolution of knowledge through time in addition to dynamic observation
change. We now illustrate with an example how both interact.




\begin{example}[Security scenario]
  \label{ex-security}
 Consider a system
 with two possible levels of security clearance, which define what
 information   users have access to, and are modelled by
observations $o_1$ and $o_2$.  In this scenario, we want to hide a
secret $p$ from the users.  A desirable property is thus expressed by the formula
$(\D{o_1} \A G \neg\K p) \wedge (\D{o_2} \A G \neg\K p)$, which means that a
user using either $o_1$ or $o_2$ will never  know that $p$
holds. Model $M$ from Figure~\ref{fig:ex} satisfies
this formula.

Now consider the formula
 $\phi=\D{o_1} \E F \D{o_2} \K p$, which means that if the user
starts with observation $o_1$, there exists a moment where she might
discover the secret by changing her observation. 
We show that $M$ satisfies $\phi$ and thus that it should forbid  users to change their security
 level. Consider the history $h=s_0s_2s_5$ in model $M$
with initial observation $o_1$.
At time 0 the user knows that the system is in state
$s_0$. After going to $s_2$, she does not know if the current state is $s_2$
or $s_1$, as they are indistinguishable using $o_1$. At time 2, at first the
user  does not know whether the system is in $s_4$ or $s_5$.
Even though $s_6$ and $s_5$ are indistinguishable with $o_1$, she does not
consider $s_6$ possible, as she remembers that the system was not in $s_3$ before.
Now, if she
changes to observation $o_2$, she sees that the system is either in state
$s_5$ or $s_6$. Refining her previous knowledge that the system is
either in state $s_4$ or $s_5$, she deduces that
the current state is $s_5$, and that $p$ holds. 
\end{example}

\begin{figure}
  \centering
  \begin{tikzpicture}[%
        every node/.style={circle,minimum size=4pt,minimum height=4pt,
          inner sep=2pt, scale=1},
        shorten >=2pt,
        node distance=0.5cm, >=latex,
        every text node part/.style={align=center}
      ]
      \node [] (0) [initial above, initial by arrow, initial text={},rectangle] {$s_0$\\$\neg p~$};
      \node [] (1) [rectangle, below=of 0] {$s_2$\\$\neg p$};
      \node [] (2) [rectangle, left=of 1,xshift=-0.8cm] {$s_1$\\$\neg p$};
      \node [] (3) [rectangle, right=of 1,xshift=0.8cm] {$s_3$\\$\neg p$};
      \node [] (4) [rectangle, below=of 1] {$s_5$\\$p$};
      \node [] (5) [rectangle, below=of 2] {$s_4$\\$\neg p$};
      \node [] (6) [rectangle, below=of 3] {$s_6$\\$\neg p$};
      \path 
      (0) edge[->]  node {} (1)
      (0) edge[->]  node {} (2)
      (0) edge[->]  node {} (3)
      (1) edge[->]  node {} (4)
      (2) edge[->]  node {} (5)
      (3) edge[->]  node {} (6)
      (4) edge[->, loop below]  node {} (4)
      (5) edge[->, loop below]  node {} (5)
      (6) edge[->, loop below]  node {} (6);
      \path (4)--node[sloped]{$\eqstate{o_1}$}(5);
      \path (1)--node[sloped]{$\eqstate{o_1}$}(2);
      \path (1)--node[sloped]{$\eqstate{o_2}$}(3);
      \path (4)--node[sloped]{$\eqstate{o_2,o_1}$}(6);
    \end{tikzpicture}
    \caption{Model $M$ in the security example}
    \label{fig:ex}
\end{figure}


\newcommand{\sensor}{\text{sc}}

\begin{example}[Fault-Tolerant Diagnosability]
  \label{diagnos}
  Diagnosability is a property of systems in which the occurrence of a
  failure is always  eventually detected~\cite{sampath1995diagnosability}.
In the setting considered in~\cite{bittner2012symbolic},  a
\textit{diagnosability condition} is
  a pair $(c_1,c_2)$ of nonempty disjoint sets of states that the
  system should always be able to tell apart. 
  The system is monitored through a set of sensors, and they study the
  problem of finding minimal sets of sensors that ensure diagnosability.
  That is, find a minimal sensor configuration $\sensor$ such that $\D{o_{\sensor}}\A\G(\K c_1
  \vee \K c_2)$ holds, where $o_\sensor$ is the observation
  corresponding to sensor configuration $\sensor$.
  
\newcommand{\phidiag}{\Phi_{\text{diag}}}

  In \ctlskd one can express and model-check a stronger notion of
  diagnosability that we call \textit{fault-tolerant diagnosability},
  where the system must remain diagnosable even after the loss of a
  sensor.  For
  a given diagnosability condition $(c_1,c_2)$ and sensor
  configuration $\sensor$, we write $o_\sensor$ the original observation
  (with every sensor in $\sensor$), $o_i$ the observation where sensor $i$
  failed, and $p_i$ is a proposition indicating the failure of sensor
  $i$. The
  following formula expresses that sensor configuration $\sensor$
  ensures fault-tolerant diagnosability of the system:
  \[\phidiag=\D{o_\sensor}\A\G((\K c_1\vee\K c_2)\wedge (p_i\rightarrow
  \D{o_i}\A\G(\K c_1\vee\K c_2))).\]


Observe that it is possible for a system to satisfy $\phidiag$ but not
$\D{o_i}\A\G(\K c_1\vee\K c_2)$ if sensor $i$, before failing, brings some piece of
information that is crucial for diagnosis.
\end{example}

\subsection{Model-checking problem}
We are interested in the model checking-problem for \ctlskd which
consists in, given a model $M$ and a formula $\phi$, deciding whether
$M\models \phi$.

\vspace{1ex}
\head{Model-checking approach} Perfect-recall semantics refers to histories of unbounded
length, but it is well known that in many situations it is possible
to maintain a bounded amount of information that is sufficient to deal
with perfect recall. We show that it is also the case for our logic,
by generalising the classic approach to take into account observation change.
 Intuitively,  it is enough to know the current state,
the current observation and the set of states that the agent believes
the system might be in. The latter  is usually called
\emph{information set} in epistemic temporal logics and games with
imperfect information. We define an alternative
semantics based on information sets instead of
histories and records, and we prove that this semantics is equivalent
to the natural one presented in this section.
Because information sets are of bounded size, it is then easy to 
 build from this alternative semantics  a model
 checking algorithm for \ctlskd. 




\section{Alternative semantics}
\label{sec:alternative}




We define an alternative semantics for \ctlskd. It is based on information sets,
a classic notion in games with imperfect
information~\cite{von2007theory}, whose definition we now adapt to our
setting.

\begin{definition}[Information sets]
Given a model $M$, the information set $\IS(h,r)$ after a history
$h$ and an observation record $r$ is defined  as
follows:
\[\IS(h,r)=\{s\in S \mid \exists h',h'\eqh{r}h~\text{and}~\last(h')=s\}.\]  
\end{definition}

 This
information is sufficient to evaluate epistemic formulas for one
agent when we consider the S5 semantics of knowledge, \ie, when
indistinguishability relations are equivalence relations as is the
case here. We now describe how this information can be maintained along
the evaluation of a formula. To do so, we define two update functions
for information sets. One  is used
 when the agent changes of observational power, and the other
 when a transition is taken in the model.

\begin{definition}[Information set updates]
  \label{def-updates}
Fix a model
$M=(\APf,S,T,V,\{\eqstate{o}\}_{o\in\Obs},\sinit,\oinit)$.  Functions $\UT$ and
$\UD$ are defined as follows, for all
$I\subseteq S$, all $s,s'\in S$ and $o,o'\in\Obs$.
\[
\begin{array}{l}
  \UT(I,s',o)= T(I)\inter \eqc{s'}\\
  \UD(I,s,o')= I \inter \eqc[o']{s}
\end{array}
\]  
\end{definition}
These updates read as follows.  When the agent has
observational power $o$ and information set $I$, and the model takes a
transition to a state $s'$, the new information set is $\UT(I,s',o)$,
which consists of all successors of her previous information set $I$ that
are $\eqstate{o}$-indistinguishable with the new state $s'$.
When the agent is in state $s$
with information set $I$, and she changes for observational power $o'$,
her new information set is $\UD(I,s,o')$, \ie, all states that she
considered possible before  and that she still considers possible
after switching to $o'$.



We let $\lastobs(h,r)$ be the last observation taken by the agent
after history $h$, according to $r$. 
Formally, $\lastobs(h,r)=o_n$ if $\obslist(r,|h|-1)=o_1\cdot\ldots\cdot o_n$.
The following result establishes that the functions $\UD$ and $\UT$ correctly update
information sets.
 It is proved by simple application of the definitions.

\begin{proposition}
  \label{lem-UT}
  For every history $h\cdot s$, every observation record $r$ that stops
at $h$, and every observation $o$, it
  holds that
  \[
  \begin{array}{rcl}
    \IS(h\cdot s, r)&=&\UT(\IS(h,r),s,\lastobs(h,r))\mbox{,\quad and}\\
    \IS(h,r\cdot(o,|h|-1))&=&\UD(\IS(h,r),\last(h),o).
  \end{array}
  \]
\end{proposition}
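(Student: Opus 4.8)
The plan is to prove both equalities by double containment, unfolding the definition of $\IS$ and of the equivalence $\eqh{r}$ in each case and reading off the membership condition. The only facts used beyond the definitions are that $r$ stops at $h$ (which controls $\obslist$ at the final time step in the first equality) and the inductive clause defining $\obslist$ (which controls the effect of appending a pair in the second).

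For the first equality, I would start from $s''\in\IS(h\cdot s,r)$, i.e.\ there is a history $h''$ with $h''\eqh{r}(h\cdot s)$ and $\last(h'')=s''$. Since $|h''|=|h|+1$, write $h''=h'\cdot s''$ with $|h'|=|h|$. Because $r$ stops at $h$, the set $\rn{|h|}$ is empty, so $\obslist(r,|h|)=\last(\obslist(r,|h|-1))=\lastobs(h,r)$, a one-element list. Hence $h''\eqh{r}(h\cdot s)$ splits into: (i) the constraints at positions $i<|h|$, which are exactly $h'\eqh{r}h$; and (ii) the single constraint at position $|h|$, namely $s''\eqstate{\lastobs(h,r)}s$. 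Condition (i) says precisely that $\last(h')\in\IS(h,r)$, and together with the requirement that $h'\cdot s''$ be a history, i.e.\ $\last(h')\,T\,s''$, this gives $s''\in T(\IS(h,r))$; condition (ii) gives $s''\in\eqc[\lastobs(h,r)]{s}$. Conversely, any $s''\in T(\IS(h,r))\cap\eqc[\lastobs(h,r)]{s}$ is witnessed by taking a $T$-predecessor $t\in\IS(h,r)$ of $s''$, choosing $h'\eqh{r}h$ with $\last(h')=t$, and extending it to the history $h'\cdot s''$. This establishes $\IS(h\cdot s,r)=\UT(\IS(h,r),s,\lastobs(h,r))$.

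For the second equality, set $r'=r\cdot(o,|h|-1)$ and note that $r'$ still stops at $h$. By the inductive definition of $\obslist$, appending $(o,|h|-1)$ leaves $\obslist(r',i)=\obslist(r,i)$ unchanged for $i<|h|-1$, and at time $|h|-1$ it only appends $o$ at the end, so $\obslist(r',|h|-1)=\obslist(r,|h|-1)\cdot o$. Consequently, for any $h'$ with $|h'|=|h|$, the relation $h'\eqh{r'}h$ holds iff $h'\eqh{r}h$ together with the extra constraint $\last(h')\eqstate{o}\last(h)$ coming from the appended $o$ at the last position. Taking last states, $s''\in\IS(h,r')$ iff $s''\in\IS(h,r)$ and $s''\in\eqc[o]{\last(h)}$, which is exactly $\UD(\IS(h,r),\last(h),o)$.

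The computation is routine; the one point requiring care is the bookkeeping of $\obslist$. In the first equality it is essential that $r$ stops at $h$, as this is what forces $\obslist(r,|h|)$ to be the singleton $\lastobs(h,r)$ and thus to match the single equivalence class $\eqc[\lastobs(h,r)]{s}$ used by $\UT$; were an observation change recorded at time $|h|$, the update would have to intersect several classes. In the second equality the subtlety is that appending a pair at time $|h|-1$ must not disturb the observations at earlier times, which holds because $\obslist$ propagates only the last observation forward.
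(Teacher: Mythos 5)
Your proof is correct and is essentially the argument the paper intends: the paper declares this proposition ``proved by simple application of the definitions,'' and the two decomposition facts you isolate (that $r$ stopping at $h$ reduces $h'\cdot s''\eqh{r}h\cdot s$ to $h'\eqh{r}h$ together with $s''\eqstate{\lastobs(h,r)}s$, and that appending $(o,|h|-1)$ adds exactly the one extra constraint $\last(h')\eqstate{o}\last(h)$) are precisely the facts used in the paper's detailed proof of the multi-agent analogue, Proposition~\ref{lem-UTk}, in Appendix~\ref{sec-app-b}. Your careful bookkeeping of $\obslist$ and the double-containment writeup simply make explicit what the paper leaves implicit.
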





Using these update functions we can now define our alternative
 semantics for \ctlskd. 

\begin{definition}[Alternative semantics]
  \label{def-alt-semantics}
   Fix a model $M$. A history formula $\phi$
is evaluated in a state $s$, an information set
$I$ and an observation $o$.  A path formula $\psi$ is interpreted on a
run $\pi$, an information set $I$ and an observation $o$. The semantic
relation $\modelsI$ is defined by induction on
 formulas: 
\[
\begin{array}{l c l}
  s,I,o\modelsI p & \mbox{if} & p\in V(s)\\[1pt]
   s,I,o\modelsI\neg\varphi & \mbox{if} & s,I,o\not\modelsI\varphi\\[1pt]
   s,I,o\modelsI \varphi_1\wedge\varphi_2 & \mbox{if}
                                &s,I,o\modelsI\varphi_1 \text{ and }s,I,o\modelsI\varphi_2\\[1pt]
  s,I,o\modelsI\A\psi & \mbox{if} & \forall\pi \mbox{ s.t. } \pi_0=s,\; \pi,I,o\modelsI\psi\\[1pt]
  s,I,o\modelsI\K\varphi & \mbox{if} & \forall s'\in I, \; s',I,o\modelsI\varphi\\[1pt]
  s,I,o\modelsI\D{o'}\varphi & \mbox{if} & s,\UD(I,s,o'),o'\modelsI\varphi\\[1pt]
  \pi,I,o\modelsI\varphi & \mbox{if} & \pi_0,I,o\modelsI\varphi\\[1pt]
   \pi,I,o\modelsI\neg\psi & \mbox{if} & \pi,I,o\not\modelsI\psi\\[1pt]
   \pi,I,o\modelsI\psi_1\wedge\psi_2 & \mbox{if}
                                &\pi,I,o\modelsI\psi_1 \text{ and } \pi,I,o\modelsI\psi_2\\[1pt]
  \pi,I,o\modelsI\X\psi & \mbox{if} & \pi_{\geq 1},\UT(I,\pi_1,o),o\modelsI\psi\\[1pt]
  \pi,I,o\modelsI\psi_1\U\psi_2 & \mbox{if} & \exists n\geq 0 \mbox{
                                             such that }
 \pi_{\geq n},\; \UT^n(I,\pi,o),o\modelsI\psi_2 \mbox{ and}\\[1pt]
                 & & \forall m \mbox{ such that }0\leq m <n, \; \pi_{\geq m},\UT^m(I,\pi,o),o\modelsI\psi_1,
\end{array}
\]
where $\UT^n(I,\pi,o)$ is the iteration of the temporal update,
defined
inductively as follows:
\begin{itemize}
\item $\UT^0(I,\pi,o)=I$, and
\item $\UT^{n+1}(I,\pi,o)=\UT(\UT^n(I,\pi,o),\pi_{n+1},o)$.
\end{itemize}

\end{definition}

Using Proposition~\ref{lem-UT}, one can prove that the natural semantics $\models$ and the information
semantics $\modelsI$ are equivalent (the proof is in Appendix~\ref{sec-app}).

\newcounter{theo-equiv}
\setcounter{theo-equiv}{\value{theorem}}

\begin{theorem}
  \label{theo-equivalence}
  For every history formula $\phi$, model $M$, history $h$ and
  observation record $r$ that stops at $h$,
  \[h,r\models\phi \quad\mbox{iff}\quad
    \last(h),\IS(h,r),o(h,r)\modelsI \phi.\]
\end{theorem}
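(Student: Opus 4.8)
The plan is to prove the theorem by a simultaneous structural induction over both syntactic categories, history formulas $\phi$ and path formulas $\psi$. Alongside the stated equivalence for history formulas I would carry the following companion statement for path formulas: for every run $\pi$, every time $n$, and every record $r$ that stops at time $n$,
\[
\pi,n,r\models\psi \quad\text{iff}\quad \pi_{\geq n},\IS(\pi_{\leq n},r),\lastobs(\pi_{\leq n},r)\modelsI\psi.
\]
The two statements are mutually dependent — the $\A\psi$ clause for history formulas invokes the companion statement for $\psi$, and the base clause $\psi:=\varphi$ for path formulas invokes the history statement for $\varphi$ — so I would run a single induction whose hypothesis covers all strict subformulas of either kind.

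For the history cases, the atomic and Boolean clauses are immediate. The clause $\K\varphi$ is the conceptual heart: I would first record two invariances, namely that $h'\eqh{r}h$ implies $\IS(h',r)=\IS(h,r)$ (the information set depends only on the $\eqh{r}$-class) and $\lastobs(h',r)=\lastobs(h,r)$ (since $|h'|=|h|$ and the last observation depends only on $r$ and the length). Applying the induction hypothesis to each $h'\eqh{r}h$ then rewrites the natural condition ``$h',r\models\varphi$ for all $h'\eqh{r}h$'' as ``$\last(h'),\IS(h,r),\lastobs(h,r)\modelsI\varphi$ for all such $h'$'', and since $\{\last(h')\mid h'\eqh{r}h\}$ equals $\IS(h,r)$ by definition, this is exactly the alternative clause. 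For $\D{o}\varphi$ I would use Proposition~\ref{lem-UT}: the natural semantics reduces to $h,r\cdot(o,|h|-1)\models\varphi$, the induction hypothesis turns this into a $\modelsI$ judgement whose information set is $\IS(h,r\cdot(o,|h|-1))=\UD(\IS(h,r),\last(h),o)$ and whose observation is $\lastobs(h,r\cdot(o,|h|-1))=o$, matching the alternative clause precisely. For $\A\psi$ I would exploit the bijection $\pi\mapsto\pi_{\geq|h|-1}$ between runs extending $h$ and runs starting at $\last(h)$; since $\pi_{\leq|h|-1}=h$ for every such $\pi$, the companion statement converts $\pi,|h|-1,r\models\psi$ into $\pi_{\geq|h|-1},\IS(h,r),\lastobs(h,r)\modelsI\psi$, and quantifying over the resulting suffixes yields the alternative clause.

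For the path cases, the base clause $\psi:=\varphi$ follows from the history statement after noting $\last(\pi_{\leq n})=(\pi_{\geq n})_0$. The clauses $\X\psi$ and $\psi_1\U\psi_2$ both rest on iterating Proposition~\ref{lem-UT}. I would first prove the auxiliary identity
\[
\UT^{j}\bigl(\IS(\pi_{\leq n},r),\pi_{\geq n},\lastobs(\pi_{\leq n},r)\bigr)=\IS(\pi_{\leq n+j},r)
\]
for all $j\geq0$ by a short induction on $j$, using at each step that $r$ stops at time $n$, so that no observation change occurs after time $n$ and the last observation remains constantly equal to $\lastobs(\pi_{\leq n},r)$. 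With this identity in hand, the alternative semantics of $\X\psi$ and of $\psi_1\U\psi_2$ evaluated at $(\pi_{\geq n},\IS(\pi_{\leq n},r),\lastobs(\pi_{\leq n},r))$ matches the natural semantics at $(\pi,n,r)$ after the reindexing $m=n+m'$, invoking the induction hypothesis for $\psi$, $\psi_1$ and $\psi_2$ at the shifted times.

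I expect the main difficulty to lie not in any individual clause but in setting up and maintaining the dual invariant correctly: the natural-semantics datum ``time $n$ with a record stopping at $n$'' must be made to correspond, once and for all, to the alternative datum ``suffix run $\pi_{\geq n}$, information set $\IS(\pi_{\leq n},r)$, fixed observation $\lastobs(\pi_{\leq n},r)$''. The two facts that make this work are the invariance of the pair $(\IS,\lastobs)$ under $\eqh{r}$, which is what lets the quantification over indistinguishable histories collapse to a quantification over the states of a bounded information set, and the constancy of the observation along a run once the record has stopped, which justifies treating the observation as a single fixed parameter throughout the evaluation of a path formula.
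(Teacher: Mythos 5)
Your proposal is correct and follows essentially the same route as the paper's own proof: a mutual structural induction with the identical companion statement for path formulas, the same $\eqh{r}$-invariance of $(\IS,\lastobs)$ for the $\K$ case, Proposition~\ref{lem-UT} for the $\D{o}$ and $\X$ cases, and an iterated form of that proposition (your auxiliary identity on $\UT^{j}$, proved by induction on $j$) for $\U$, which is exactly the paper's inner induction on $k$ merely factored out as a standalone claim. No gaps.
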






\section{Model checking \ctlskd}
\label{sec:mc}


In this section we devise a
model-checking procedure based on the equivalence between the natural and alternative
semantics (Theorem~\ref{theo-equivalence}), and we  prove the following result.

\begin{theorem}
  \label{theo-compl}
Model checking  \ctlskd is in \EXPTIME. 
\end{theorem}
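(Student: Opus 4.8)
The plan is to turn the alternative semantics of Definition~\ref{def-alt-semantics} into a concrete algorithm. By Theorem~\ref{theo-equivalence}, deciding $M\models\phi$ reduces to deciding the relation $\modelsI$ at the initial configuration $(\sinit,\eqc[\oinit]{\sinit},\oinit)$, so it suffices to evaluate $\modelsI$. The key observation is that $\modelsI$ only ever refers to \emph{configurations} $(s,I,o)\in S\times 2^{S}\times\Obs$: the state and observation range over sets of size $|S|$ and $|\Obs|$, while the information set ranges over the powerset $2^{S}$. Hence there are at most $|S|\cdot 2^{|S|}\cdot|\Obs|$ configurations, a number exponential in the size of $M$, and the exponential blow-up comes \emph{only} from the information sets.

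First I would build an \emph{expanded structure} $M'$ whose states are the configurations $(s,I,o)$ and whose transition relation connects $(s,I,o)$ to $(s',\UT(I,s',o),o)$ whenever $sTs'$; the observation stays fixed along transitions, and the information set evolves exactly as $\UT^{n}$ does in the semantics of $\U$ and $\X$. Since $T$ is left-total, so is this relation, and paths of $M'$ starting in $(s,I,o)$ are in bijection with paths of $M$ from $s$ carrying the correct sequence of information-set updates. Consequently, evaluating a path formula $\psi$ under $\modelsI$ from $(I,o)$ coincides with evaluating $\psi$ over the corresponding path of $M'$, the iterates $\UT^{n}(I,\pi,o)$ being read off directly from the states of $M'$.

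I would then model-check $\phi$ on $M'$ by the classical bottom-up \CTLs procedure, processing history subformulas from the innermost outwards and recording, for each one, the set of configurations of $M'$ that satisfy it. Atomic propositions and Boolean connectives are immediate; the two new operators are purely local and non-temporal. For $\K\varphi$, a configuration $(s,I,o)$ satisfies it iff every $(s',I,o)$ with $s'\in I$ satisfies $\varphi$, a check in time $O(|S|)$ per configuration; for $\D{o'}\varphi$, it satisfies it iff the single configuration $(s,\UD(I,s,o'),o')$ does, where $\UD(I,s,o')=I\inter\eqc[o']{s}$ is computable in polynomial time. For a history formula $\A\psi$, once the maximal history subformulas of $\psi$ have been computed and treated as fresh atoms labelling the configurations of $M'$, the quantifier reduces to \LTL\ model checking of $\psi$ on $M'$: for each configuration one decides whether all its outgoing paths satisfy $\psi$.

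For the complexity, $M'$ has size $2^{O(|S|)}$, and each of the $O(|\phi|)$ subformulas is handled either by a per-configuration polynomial computation (for $\K$, $\D{o}$ and Booleans) or by one \LTL\ model-checking step, which runs in time linear in $|M'|$ and exponential in the formula, that is $2^{O(|S|)}\cdot 2^{O(|\phi|)}=2^{O(|S|+|\phi|)}$; summing over subformulas keeps the bound at $2^{O(|S|+|\phi|)}$, exponential in the size of the input, hence \EXPTIME. I expect the main point to get right to be not the complexity bookkeeping but the correctness of the reduction to $M'$: one must verify that the transitions of $M'$ reproduce exactly the $\UT$-iteration used by the temporal operators, so that path formulas are scored against the right sequence of information sets, while the ``horizontal'' jumps induced by $\K$ and $\D{o}$ move correctly to configurations sharing the appropriate information set and observation. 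Since this faithfulness is precisely what Theorem~\ref{theo-equivalence} and Proposition~\ref{lem-UT} already guarantee, the remaining work is mostly to package these facts into the expanded structure.
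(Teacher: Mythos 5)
Your proposal is correct and follows essentially the same route as the paper: you construct the same augmented model with configurations $(s,I,o)$ and $\UT$-transitions, evaluate $\K$ and $\D{o'}$ subformulas bottom-up by marking configurations with fresh atomic propositions (using $\UD$ for the observation-change jumps), hand the temporal reasoning to the classical \CTLs/\LTL machinery, and invoke Theorem~\ref{theo-equivalence} for correctness. The only difference is bookkeeping: the paper additionally observes that each \ctls call stays within a disjoint component $\hatM_o$ of size $2^{O(|M|)}$, yielding a bound polynomial in $|\Obs|$, but your coarser bound of $2^{O(|S|+|\phi|)}$ (times $|\Obs|$) already suffices for \EXPTIME.
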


\head{Augmented model}
Given a model $M$, we define an augmented model $\hatM$ in which the
states are tuples $(s,I,o)$ consisting of a state $s$ of the original
model, an information set $I$ and an observation $o$.
According to Theorem~\ref{theo-equivalence}, history formulas can be
viewed on this model as state formulas, and a model checking
procedure can be devised by merely  following the definition of the alternative semantics.

Let
$M=(\APf,S,T,V,\{\eqstate{o}\}_{o\in\Obs},\sinit,\oinit)$. 
We define the Kripke structure $\hatM=(S',T',V',\sinit')$, where:
\begin{itemize}
\item $S'=S\times 2^{S}\times \mathcal{O}$,
\item $(s,I,o)~T'~(s',I',o)\mbox{ if } s~T~s'$ and $I'=\UT(I,s',o)$,
\item $V'(s,I,o)=V(s)$, and 
\item $\sinit' = (\sinit,\eqc[\oinit]{\sinit},\oinit)$.
\end{itemize}
We call $\hatM$ the \emph{augmented model}, and we write $\hatM_o$
the Kripke structure obtained by restricting $\hatM$
to states of the form $(s,I,o')$ where $o'=o$. Note that the different $\hatM_o$ are disjoint with regards to $T'$.

\vspace{1ex}
\head{Model-checking procedure}
We now define the function \textsc{Check}\ctlskd\
which evaluates a history formula in a  state of the augmented model $\hatM$:  

Function
\textsc{Check}\ctlskd($\hatM,(s_c,I_c,o_c),\Phi$) returns
\emph{true} if $M,s_c,I_c,o_c\modelsI\varphi$ and \emph{false}
otherwise, and is defined
as follows: if $\Phi$ is a \ctls formula, we evaluate it using a
classic model-checking procedure for \ctls. Otherwise, $\Phi$ contains a subformula
of the form $\phi=\K\varphi_1$ or $\phi=\D{o'}\varphi_1$ where
$\varphi_1\in \ctls$. We evaluate $\phi_1$ in every state of 
$\hatM$, and 
mark those that satisfy $\phi_1$ with a fresh atomic proposition
$p_{\varphi_1}$.  Then, if $\phi=\K\varphi_1$, we mark with a fresh atomic
proposition $p_\phi$ every  state $(s,I,o)$ of $\hatM$ such that
for every $s'\in I$, $(s',I,o)$ is marked with $p_{\varphi_1}$.  Else, 
$\phi=\D{o'}\varphi_1$ and we mark with a fresh proposition $p_\phi$
every state $(s,I,o)$  such that
$(s,\UD(I,s,o'),o')$ is marked with $p_{\varphi_1}$. Finally, we
recursively call function \textsc{Check}\ctlskd on the marked model and formula
$\Phi'$ obtained by replacing $\phi$ with $p_\phi$ in $\Phi$. 

To model check a formula $\phi$ in a model $M$, we first build
$\hatM$,  and call function
\textsc{Check}\ctlskd($\hatM,(s_\init,\eqc[o_\init]{s_\init},o_\init),\phi$).

\begin{example}
  \label{ex-algo}
Let $M$ be the model depicted in \textbf{Fig.}~\ref{fig:m}, where 
$o_1$ is the blind observation  ($\eqstate{o_1}=S\times S$),
$o_2$ is the perfect observation
($s\eqstate{o_2}s'\iff s=s'$),  $s_1$  is the initial state and the
agent is initially blind (the initial observation is $o_1$). Note that we did not
represent indistinguishability relations $o_1$ and $o_2$. The augmented model $\hatM$ is depicted in
 \textbf{Fig.}~\ref{fig:hatm}, where we only drew
relevant states, \ie those that are
reachable from the initial state via transitions and observation changes. 

Consider  formula $\varphi=\D{o_2}(\K q\vee\D{o_1}\K\A\X q)$ which  means that if the agent changes to the perfect
observation, then either the agent knows that $q$ holds, or even after
switching back to the blind observation she knows that in every possible next
step, $q$ holds.  After running the algorithm, we get the following
valuation for $\hatM$:

\vspace{2pt}
\noindent$\begin{array}{l c l}
V'(s_1,\{s_1,s_2\},o_1) &=& \{q,p_\varphi\}\\
V'(s_2,\{s_1,s_2\},o_1) &=& \{p_\varphi \}
\end{array}\newline
\begin{array}{l c l}
V'(s_1,\{s_1\},o_1) &=& \{q,p_{(\K q)},p_{(\K q\vee\D{o_1}\K\A\X q)},p_\varphi\}\\
V'(s_2,\{s_2\},o_1) &=& \{p_{(\D{o_1}\K\A\X q)},p_{(\K q\vee\D{o_1}\K\A\X q)},p_\varphi\}\\
V'(s_1,\{s_1\},o_2) &=& \{q,p_{(\K q)},p_{(\K q\vee\D{o_1}\K\A\X q)},p_\varphi\}\\
V'(s_2,\{s_2\},o_2) &=& \{p_{(\D{o_1}\K\A\X q)},p_{(\K q\vee\D{o_1}\K\A\X q)},p_\varphi\}\\
\end{array}$

\vspace{2pt}

Let us explain why $p_\phi$ is eventually marked in the initial state
$(s_1,\{s_1,s_2\},o_1)$ of $\hatM$. First look at state $(s_1,\{s_1\},o_2)$. Since $q$ is in $V(s_1)$, 
 $q$ is also in $V'(s_1,\{s_1\},o_2)$. Next,  because
for all $s'$ in $\{s_1\}$, $q$ is in $V'(s',\{s_1\},o_2)$, the fresh atom
$p_{(\K q)}$ is added in $V'(s_1,\{s_1\},o_2)$ and thus the fresh atom 
 $p_{(\K q\vee\D{o_1}\K\A\X q)}$ is later also added in $V'(s_1,\{s_1\},o_2)$.
 Finally, when $\D{o_2}(\K q\vee \D{o_1}\K\A\X q)$ is evaluated on
 $(s_1,\{s_1,s_2\},o_1)$,
 since $\UD(\{s_1,s_2\},s_1,o_2)=\{s_1\}$ and
  $p_{(\K q\vee\D{o_1}\K\A\X q)}$ is in $V'(s_1,\{s_1\},o_2)$, the
  fresh atom
$p_\varphi$ is eventually added in $V'(s_1,\{s_1,s_2\},o_1)$.

The algorithm thus returns, as expected, that $M\models\varphi$.

\begin{minipage}{.25\linewidth}
  \centering
  \begin{tikzpicture}[%
    scale=0.9,
        every node/.style={circle,minimum size=4pt,minimum height=4pt,
          inner sep=2pt, scale=0.9},
        shorten >=2pt,
        node distance=1.3cm, >=latex,
        every text node part/.style={align=center}
      ]
      \node [initial below, initial by arrow, initial text={}] (0) [circle] {$s_1$\\$q$};
      \node [] (1) [circle, right=of 0] {$s_2$\\$\neg q$}; 
      \path [draw] (0) edge[->, bend right]  node {} (1)
      (0) edge[->, loop above]  node {} (0)
      (1) edge[->, bend right]  node {} (0);
    \end{tikzpicture}
    \captionof{figure}{Model $M$}
    \label{fig:m}
\end{minipage}~
\begin{minipage}{.7\linewidth}%
  \centering
  \begin{tikzpicture}[%
    scale=0.8,
        every node/.style={circle,minimum size=4pt,minimum height=4pt,
          inner sep=2pt, scale=0.8},
        shorten >=2pt,
        node distance=1cm, >=latex,
        every text node part/.style={align=center}
      ]
      \node [] (0) [initial above, initial by arrow, initial text={}, rectangle] {$s_1,\{s_1,s_2\},o_1$\\$q$};
      \node [] (1) [rectangle, right=of 0] {$s_2,\{s_1,s_2\},o_1$\\$\neg q$};
      \node [] (2) [rectangle, below=of 0] {$s_1,\{s_1\},o_1$\\$q$};
      \node [] (3) [rectangle, below=of 1] {$s_2,\{s_2\},o_1$\\$\neg q$};
      \node [] (4) [rectangle, below=of 2] {$s_1,\{s_1\},o_2$\\$q$};
      \node [] (5) [rectangle, below=of 3] {$s_2,\{s_2\},o_2$\\$\neg q$};
      \path [draw] (0) edge[->, bend right]  node {} (1)
      (0) edge[->, loop left]  node {} (0)
      (1) edge[->, bend right]  node {} (0)
      (4) edge[->, bend right]  node {} (5)
      (4) edge[->, loop left]  node {} (4)
      (5) edge[->, bend right]  node {} (4)
      (2) edge[->]  node {} (0)
      (2) edge[->, bend right]  node {} (1)
      (3) edge[->, bend left]  node {} (2);
    \end{tikzpicture}%
  \captionsetup{width=.7\linewidth}
  \captionof{figure}{The augmented model $\hatM$}
    \label{fig:hatm}
\end{minipage}
\end{example}

\head{Algorithm correctness}
The correctness of the algorithm follows from the following properties:
\begin{itemize}
\item For each formula $\K\varphi_1$ chosen by the algorithm, \\$p_{\phi}\in V'(s,I,o)\iff M,s,I,o\modelsI\K\varphi_1$
\item For each formula $\D{o'}\varphi_1$ chosen by the algorithm, \\$p_{\phi}\in V'(s,I,o)\iff M,s,I,o\modelsI\D{o'}\varphi_1$
\end{itemize}



\vspace{1ex}
\head{Complexity analysis} 
In the following, we let $|M|$ be the number of states in model $M$.
Model checking a \ctls formula $\phi$ on a model $M$ with state-set
$S$ can be done in time
$2^{O(|\phi|)}O(|S|)$~\cite{emerson1987modalities,kupferman2000automata}.
Our procedure, for a \ctlskd formula $\phi$ and a model $M$, calls the
\ctls model-checking procedure for at most $|\phi|$ formulas of size
at most $|\phi|$, on each state of the augmented model $\hatM$. The latter is of size $2^{O(|M|)}\times |\Obs|$, but each
 call to the \ctls model-checking procedure is performed on a disjoint component $\hatM_o$
of size $2^{O(|M|)}$. Our overall procedure thus runs in time
$|\Obs|\times 2^{O(|\phi|+|M|)}$.


\section{Multi-agent setting}
\label{sec:multi}
We now extend \ctlskd to the multi-agent setting.  We fix $\Ag=\{a_1,\ldots,a_m\}$
a finite set of agents and define the logic \ctlskdm. This logic contains, for each agent $a$ and
observation $o$, an operator $\Da{o}$
which reads as ``agent $a$ changes for observation $o$''. We consider
that these observation changes are public
in the sense that all agents
are aware of them.  The reason is that if agent $a$ changes
observation without agent $b$ knowing it, agent $b$ may entertain false
beliefs about what agent $a$ knows. This would not be consistent with the S5
semantics of knowledge that we consider in this work, where false beliefs are ruled out by the Truth
axiom $\K\phi \impl \phi$. 

\subsection{Syntax and natural semantics}

We first extend the syntax, with knowledge operators $\K_a$ and
observation change operators $\Da{o}$ for each agent.

\begin{definition}[Syntax]
The sets of history formulas $\phi$ and path formulas $\psi$ are
defined by the following grammar:
\[
\begin{array}{ccl}
  \phi & ::= & p ~|~ \neg \varphi ~|~ \varphi\wedge\varphi ~|~ \A\psi
               ~|~ \K_a\varphi ~|~ \Da{o}\varphi\\
  \psi & ::= & \varphi ~|~ \neg\psi ~|~ \psi\wedge\psi ~|~ \X\psi ~|~ \psi\U\psi,
\end{array}
\]
where $p\in\AP$, $a\in\Ag$ and $o\in\Obs$. 
\end{definition}

Formulas of \ctlskdm  are all history formulas.

The models of \ctlskdm are as for the one-agent case, except that we
assign one initial observation to each agent. In the following we shall write $\ovec$ for a tuple $\{o_a\}_{a\in\Ag}$, 
$\ovec_a$ for $o_a$, and $\ovec[a\leftarrow o]$ for the tuple $\ovec$
where $\ovec_a$ is replaced with $o$. Finally, for $i\in \{1,\ldots,m\}$, $\ovec_i$ refers to $\ovec_{a_i}$.

\begin{definition}[Multiagent models]
  \label{def-multi-model}
A \emph{multiagent Kripke
structure with observations} is given by a structure
$M=(\APf,S,T,V,\{\eqstate{o}\}_{o\in\Obs},\sinit,\ovecinit)$, where
\begin{itemize}
  \item $\APf\subset \AP$  is a finite subset of atomic propositions,
\item $S$ is a set of states,
\item $T\subseteq S\times S$ is a left-total transition relation between states,
\item $V:S\rightarrow 2^{\APf}$ is a valuation function, 
\item $\eqstate{o}\subseteq S\times S$ is an equivalence relation, for each
  $o\in\Obs$, 
 \item $\sinit\subseteq S$ is an initial state, and
\item $\ovecinit$ is the initial observation for each agent.
\end{itemize}
\end{definition}

We now adapt some definitions to the multi-agent setting.

\vspace{1ex}
\head{Records tuples} We now need one
observation record for each agent. We shall write $\rvec$ for a tuple
$\{r_a\}_{a\in\Ag}$. Given a tuple $\rvec=\{r_a\}_{a\in\Ag}$ and $a\in\Ag$ we
write $\rvec_a$ for $r_a$, and for an observation $o$ and time $n$ we
let $\rvec\cdot (o,n)_a$ be the record tuple $\rvec$ where $\rvec_a$ is
replaced with $\rvec_a\cdot (o,n)$.
Finally, for $i\in \{1,\ldots,m\}$, $\rvec_i$ refers to $\rvec_{a_i}$.

 \vspace{1ex}
\head{Observations at time $n$} We let  $\obslista(\rvec,n)$ be the list
of observations used by agent $a$  at time $n$:
\begin{flalign*}
  \obslista(\rvec,0)& =  \ovecinit_a \cdot o_1 \cdot \ldots \cdot
  o_k,& \\[1pt]
 & & \mbox{if }  \rn[\rvec_a]{0}=(o_1,0)\cdot\ldots\cdot(o_k,0), \mbox{ and}\\[2pt]
 \obslista(\rvec,n+1)&= \last(\obslista(\rvec,n))\cdot o_1 \cdot \ldots \cdot o_k,&\\[1pt]
& & \mbox{if } \rn[\rvec_a]{n+1}=(o_1,n+1) \cdot \ldots \cdot (o_k,n+1).
\end{flalign*}


\begin{definition}[Dynamic synchronous perfect recall]
Given a record tuple $\rvec$, two histories $h$ and $h'$ are
equivalent for agent $a$, written 
  $h\eqha{\rvec}h'$, if
\[|h|=|h'|~\textit{and}~\forall i< |h|,~\forall o\in
\obslista(\rvec,i),~ h_i\eqstate{o} h'_i.\]
\end{definition}

\begin{definition}[Natural semantics]
  \label{def-nat-semantics-multi}
Let $M$ be a model, $h$ a history in $M$ and $\rvec$ a record tuple.
We only define the semantics for the following inductive cases, the
remaining ones are straightforwardly adapted from the one-agent case
(Definition~\ref{def-nat-semantics}).
\[
\begin{array}{lcl}
h,\rvec \models\K_a\varphi & \mbox{if} & \forall h' \mbox{ s.t. }h'\eqha{\rvec}h,\; h',\rvec\models\varphi\\[2pt]
h,\rvec \models \D{o}_a\varphi& \mbox{if} & h,\rvec\cdot(o,|h|-1)_a\models\varphi
\end{array}
\]
\end{definition}
A model $M$ with initial state $\sinit$  \emph{satisfies} a \ctlskdm formula $\phi$, written $M\models\phi$, if 
$\sinit,\eobsrecvec\models\varphi$, where $\eobsrecvec$ is the tuple
where each agent has empty observation record.

\subsection{Alternative semantics}
As in the one-agent case, we define an alternative  semantics
that we prove equivalent to the natural one and upon which we build our
model-checking algorithm.
The main difference here is that we need richer structures than
information sets to represent an epistemic situation of
a  system with multiple agents. For instance, to evaluate formula
$\K_a\K_b\K_cp$, we need to know what agent $a$ knows about agent
$b$'s knowledge of agent $c$'s knowledge of the system's state.
To do so we use the  \ktrees introduced
in~\cite{DBLP:journals/iandc/Meyden98,DBLP:conf/fsttcs/MeydenS99}
 in the setting of static observations, and which contain enough
information to evaluate formulas of knowledge depth $k$.


\vspace{1ex}
\head{\ktrees}
Fix a model $M=(\APf,S,T,V,\{\eqstate{o}\}_{o\in\Obs},\sinit,\ovecinit)$. Intuitively, a
\ktree over $M$ is a structure of the form $\langle
s, \forest_1,\ldots,\forest_m\rangle$, where
$s\in S$ is the current state of the system, and for each
$i\in \{1,\ldots,m\}$, $\forest_i$ is a set of $(k-1)$-trees that represents the state of knowledge
(of depth $k-1$)
of  agent $a_i$.  Formally, for every history $h$ and record tuple $\rvec$ we define by
induction on $k$ the $k$-tree $\KT{k}(h,\rvec)$ as follows:
\begin{align*}
\KT{0}(h,\rvec)&\egdef \langle \last(h),\emptyset,\ldots,\emptyset
  \rangle \\
\KT{k+1}(h,\rvec) &\egdef \langle
  \last(h),\forest_1,\ldots,\forest_m\rangle,
\end{align*}
where for each $i$, $\forest_i\egdef\{\KT{k}(h',\rvec) \mid h'
\eqha[a_i]{\rvec} h\}$.

For a $k$-tree $\KT{k}=\langle \state, \forest_1, \ldots, \forest_m
\rangle$, we call $\state$ the \emph{root} of $\KT{k}$, and 
write it $\racine(\KT{k})$. We also write $\KT{k}(a)$ for
$\forest_{i}$, where $a=a_i$, and we let $\setktrees$ be the set of
$k$-trees for $M$.

Observe that for one agent (\ie $m=1$), a $1$-tree is an information
set together with the current state.





\vspace{1ex}
\head{Updating \ktrees}
We now generalise our update functions $\UD$ and $\UT$
(Definition~\ref{def-updates}) to update \ktrees. We first define, by
induction on $k$, the function
 $\UTK{k}$ that  updates \ktrees when a transition
is taken.
\begin{align*}
  \UTK{0}(\langle s, \emptyset, \ldots,\emptyset\rangle,s',\ovec) & \egdef \langle s', \emptyset,
                          \ldots,\emptyset \rangle \\
  \UTK{k+1}(\langle s, \forest_1, \ldots,\forest_m\rangle,s',\ovec) & \egdef \langle s', \forest'_1,
                                  \ldots, \forest'_m\rangle,
\end{align*}
where for each $i$, \[\forest'_i\egdef \{\UTK{k}(\KT{k},s'',\ovec)\mid
\KT{k}\in\forest_i\mbox{, }s'' \eqstate{\ovec_i} s' \mbox{ and
}\racine(\KT{k})T s''\}.\]

$\UTK{k}$ takes the current \ktree $\langle s, \forest_1,
\ldots,\forest_m\rangle$, the new state $s'$ and the current
observation  $\ovec$ for each agent, and returns the new $k$-tree after
the transition.

We now define the second update function $\UDK{k}$, which is used when
an agent $a_i$ changes observation for some $o'$.
\begin{align*}
  \UDK{0}(\langle s, \emptyset, \ldots, \emptyset\rangle,o,a_i) & \egdef \langle s, \emptyset, \ldots, \emptyset\rangle \\
  \UDK{k+1}(\langle s, \forest_1, \ldots,\forest_m\rangle,o,a_i) & \egdef \langle s, \forest'_1,
                                  \ldots,\forest'_m\rangle,
\end{align*}
where for each $j\neq i$, \[\forest'_j=\{\UDK{k}(\KT{k},o',a_i)\mid
  \KT{k}\in\forest_j\}, \mbox{ and}\]
\[\forest'_i\egdef \{\UDK{k}(\KT{k},o',a_i)\mid
\KT{k}\in\forest_i \mbox{ and } \racine(\KT{k})\eqstate{o'} s\}.\]

The intuition is that when agent~$a_i$ changes observation for $o'$, 
 in every place of the \ktree that refers to agent~$a_i$'s knowledge,
 we remove possible states (and corresponding subtrees) that are no longer equivalent to the current
possible state for $a_i$'s new observation $o'$.

We let $\lastovec(h,\rvec)$ be the tuple of last observations taken by each agent
after history $h$, according to $\rvec$.
For each $a\in\Ag$, $\lastovec(h,\rvec)_a=o_n$ if $\obslista(\rvec,|h|-1)=o_1\cdot\ldots\cdot o_n$.
The following proposition establishes that functions $\UTK{k}$ and $\UDK{k}$ correctly update \ktrees.

\newcounter{lem-UTk}
\setcounter{lem-UTk}{\value{theorem}}
\begin{proposition}
  \label{lem-UTk}
For every history $h\cdot s$, record tuple $\rvec$ that stops at $h$, observation tuple $\ovec$  and 
integer $k$, it holds that
\[
  \begin{array}{rcl}
  \KT{k}(h\cdot s,\rvec)&=&\UTK{k}(\KT{k}(h,\rvec),s,\ovec(h,\rvec)),\mbox{ and}\\
    \KT{k}(h,\append[\rvec]{(o,|h|-1)_a})&=&\UDK{k}(\KT{k}(h,\rvec),o,a).
  \end{array}
\]
\end{proposition}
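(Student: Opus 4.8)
The plan is to prove Proposition~\ref{lem-UTk} by induction on the knowledge depth $k$, since both the $k$-trees and the update functions $\UTK{k},\UDK{k}$ are defined inductively on $k$. For each of the two claimed equalities, the base case $k=0$ is immediate: a $0$-tree is just $\langle\last(h),\emptyset,\ldots,\emptyset\rangle$, so the transition equality reduces to checking that $\last(h\cdot s)=s$, which matches $\UTK{0}$, and the observation-change equality holds because changing observation does not alter the root, again matching $\UDK{0}$. So the real content is in the inductive step.

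For the transition equality $\KT{k+1}(h\cdot s,\rvec)=\UTK{k+1}(\KT{k+1}(h,\rvec),s,\ovec(h,\rvec))$, I would unfold both sides. On the left, the $i$-th child forest is $\{\KT{k}(h'',\rvec)\mid h''\eqha[a_i]{\rvec}(h\cdot s)\}$. On the right, by definition of $\UTK{k+1}$ applied to $\KT{k+1}(h,\rvec)=\langle\last(h),\forest_1,\ldots,\forest_m\rangle$, the $i$-th forest is $\{\UTK{k}(\KT{k}(h',\rvec),s'',\ovec)\mid h'\eqha[a_i]{\rvec}h,\ s''\eqstate{\ovec_i}s,\ \last(h')\,T\,s''\}$. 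The key step is a bijection between histories $h''$ of length $|h|+1$ equivalent to $h\cdot s$ for $a_i$, and pairs $(h',s'')$ where $h'\eqha[a_i]{\rvec}h$ and $s''\eqstate{\ovec_i}s$ with $\last(h')\,T\,s''$: namely $h''=h'\cdot s''$. This requires showing that $h'\cdot s''\eqha[a_i]{\rvec}h\cdot s$ holds exactly when $h'\eqha[a_i]{\rvec}h$ and the states at the final position are indistinguishable for all observations in $\obslista(\rvec,|h|)$; here the crucial point is that since $\rvec$ stops at $h$, the list $\obslista(\rvec,|h|)$ reduces to the single observation $\lastovec(h,\rvec)_{a_i}=\ovec_i$, so the final-position condition is exactly $s''\eqstate{\ovec_i}s$. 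Once this decomposition of the equivalence is established, the inductive hypothesis turns $\KT{k}(h'\cdot s'',\rvec)$ into $\UTK{k}(\KT{k}(h',\rvec),s'',\ovec)$, and the two forests coincide.

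For the observation-change equality $\KT{k+1}(h,\append[\rvec]{(o,|h|-1)_a})=\UDK{k+1}(\KT{k+1}(h,\rvec),o,a)$ with $a=a_i$, I would again compare forests. The roots agree trivially since the history is unchanged. For a child forest $\forest_j$ with $j\neq i$, appending $(o,|h|-1)$ to $\rvec_{a_i}$ does not change the equivalence relation $\eqha[a_j]{\cdot}$ at any position (it only touches agent $a_i$'s record), so the set of witnessing histories $h'$ is unchanged at the top level, and the inductive hypothesis handles the propagation of the change into the subtrees; this gives $\forest'_j=\{\UDK{k}(\KT{k},o,a_i)\mid\KT{k}\in\forest_j\}$. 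For $\forest_i$, appending $(o,|h|-1)$ adds $o$ to $\obslista(\rvec,|h|-1)$, so the condition $h'\eqha[a_i]{\append[\rvec]{(o,|h|-1)_a}}h$ becomes $h'\eqha[a_i]{\rvec}h$ together with the extra requirement $h'_{|h|-1}\eqstate{o}h_{|h|-1}$, i.e.\ $\last(h')\eqstate{o}\last(h)$. This matches exactly the side condition $\racine(\KT{k})\eqstate{o'}s$ in the definition of $\UDK{k+1}$, and the inductive hypothesis converts $\KT{k}(h',\append[\rvec]{(o,|h|-1)_a})$ into $\UDK{k}(\KT{k}(h',\rvec),o,a_i)$.

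I expect the main obstacle to be the careful bookkeeping around $\obslista(\rvec,\cdot)$ and the ``stops at $h$'' hypothesis: one must verify precisely how appending a single change at time $|h|-1$, or extending a history by one state, modifies the observation lists at the relevant position, and confirm that these modifications contribute exactly one extra indistinguishability constraint at the final position (and none elsewhere). The remaining work is then routine set-equality reasoning combined with the inductive hypothesis, so I would not belabour it; the delicate part is isolating the single-observation reduction that makes the forest definitions line up with the update functions.
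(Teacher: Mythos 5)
Your proposal is correct and follows essentially the same route as the paper's proof: induction on $k$, unfolding the child forests on both sides, using the fact that $\rvec$ stops at $h$ to reduce the indistinguishability condition at the last position to the single observation $\ovec_i$, and splitting the observation-change case according to whether the agent's index matches. The one step the paper spells out that you leave implicit is that $\ovec(h',\rvec)=\ovec(h,\rvec)$ for every $h'\eqha[a_i]{\rvec}h$ (needed when invoking the induction hypothesis in the transition case), which is immediate because the last observation depends only on the record and the length of the history.
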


The proof can be found in Appendix~\ref{sec-app-b}.

We now define the alternative semantics for \ctlskdm.

\begin{definition}[Alternative semantics]
  \label{def-alt-semantics-multi}
The semantics of a history formula $\phi$ of knowledge depth $k$ is
defined inductively on a \ktree $\KT{k}$ and a tuple of current
observations $\ovec$ (note that the
current state is the root of the \ktree). 
\[
\begin{array}{l c l}
  \KT{k},\ovec\modelsI p & \mbox{if} & p\in V(\racine(\KT{k}))\\[2pt]
   \KT{k},\ovec\modelsI\neg\varphi & \mbox{if} & \KT{k},\ovec\not\modelsI\varphi\\[2pt]
   \KT{k},\ovec\modelsI \varphi_1\wedge\varphi_2 & \mbox{if}
                                &\KT{k},\ovec\modelsI\varphi_1 \text{ and }\KT{k},\ovec\modelsI\varphi_2\\[2pt]
   \KT{k},\ovec\modelsI\A\psi & \mbox{if} & \forall\pi \mbox{ s.t. } \pi_0=\racine(\KT{k}),\; \pi,\KT{k},\ovec\modelsI\psi\\[2pt]
  \KT{k},\ovec\modelsI\K_a\varphi & \mbox{if} & \forall \KT{k-1}\in \KT{k}(a), \; \KT{k-1},\ovec\modelsI\varphi\\[2pt]
  \KT{k},\ovec\modelsI\D{o'}_a\varphi & \mbox{if} &
                                                    \UDK{k}(\KT{k},o',a),\ovec[a\leftarrow
                                                    o']\modelsI\varphi\\
  & & \quad\quad\quad \mbox{where }\ovec[a\leftarrow o'] \mbox{ is }\ovec
      \mbox{ where }\ovec_a \mbox{ is replaced with }o'\\[2pt]
   \pi,\KT{k},\ovec\modelsI\varphi & \mbox{if} & \KT{k},\ovec\modelsI\varphi\\[2pt]
   \pi,\KT{k},\ovec\modelsI\neg\psi & \mbox{if} & \pi,\KT{k},\ovec\not\modelsI\psi\\[2pt]
   \pi,\KT{k},\ovec\modelsI\psi_1\wedge\psi_2 & \mbox{if}
                                &\pi,\KT{k},\ovec\modelsI\psi_1 \text{ and } \pi,\KT{k},\ovec\modelsI\psi_2\\[2pt]
   \pi,\KT{k},\ovec\modelsI\X\psi & \mbox{if} & \pi_{\geq 1},\UTK{k}(\KT{k},\pi_1,\ovec),\ovec\modelsI\psi\\[2pt]
   \pi,\KT{k},\ovec\modelsI\psi_1\U\psi_2 & \mbox{if} & \exists n\geq 0 \mbox{
                                              such that }
 \pi_{\geq n},{\UTK{k}}^n(\KT{k},\pi,\ovec),\ovec\modelsI\psi_2 \mbox{ and}\\[1pt]
                  & & \forall m \mbox{ such that }0\leq m <n,\; \pi_{\geq m},{\UTK{k}}^m(\KT{k},\pi,\ovec),\ovec\modelsI\psi_1,
\end{array}
\]
where ${\UTK{k}}^n$ is the iteration of $\UTK{k}$,
defined similarly to the mono-agent case.
\end{definition}

The following theorem can be proved similarly to
Theorem~\ref{theo-equivalence}, using Proposition~\ref{lem-UTk}
instead of Proposition~\ref{lem-UT}.

\begin{theorem}
  \label{theo-equivalence-multi}
  For every history formula $\phi$ of knowledge depth $k$, each model $M$, history $h$ and
tuple of records $\rvec$,
  \[h,\rvec\models\phi \quad\mbox{iff}\quad
    \KT{k}(h,\rvec),\ovec(h,r)\modelsI \phi.\]
\end{theorem}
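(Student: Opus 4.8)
The plan is to prove Theorem~\ref{theo-equivalence-multi} by induction on the structure of the history formula $\phi$, exactly mirroring the proof of Theorem~\ref{theo-equivalence} in the single-agent case, but now carrying the knowledge depth $k$ along and invoking Proposition~\ref{lem-UTk} (the $k$-tree update lemma) wherever the single-agent argument invoked Proposition~\ref{lem-UT}. The statement to prove is that for every history formula $\phi$ of knowledge depth $k$, every model $M$, history $h$ and record tuple $\rvec$ stopping at $h$,
\[
h,\rvec\models\phi \quad\text{iff}\quad \KT{k}(h,\rvec),\ovec(h,\rvec)\modelsI\phi.
\]
Since path formulas are interpreted mutually with history formulas, I would state and prove simultaneously the companion claim that for every path formula $\psi$ of knowledge depth $k$, every path $\pi$, time $n$ and record tuple $\rvec$ (stopping at $\pi_{\le n}$), one has $\pi,n,\rvec\models\psi$ iff $\pi,\KT{k}(\pi_{\le n},\rvec),\ovec(\pi_{\le n},\rvec)\modelsI\psi$. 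The whole thing is then a single induction on formula structure.

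First I would dispatch the base and Boolean cases ($p$, $\neg$, $\wedge$), which are immediate since the atomic valuation depends only on $\racine(\KT{k}(h,\rvec))=\last(h)=\pi_n$ and the semantic clauses match verbatim. For the path quantifier $\A\psi$ and the temporal operators $\X$ and $\U$, the argument is that iterating the transition update $\UTK{k}$ along $\pi$ tracks the $k$-trees along the path: concretely, by repeated application of the first identity of Proposition~\ref{lem-UTk} one shows ${\UTK{k}}^{\,j}(\KT{k}(\pi_{\le n},\rvec),\pi,\ovec)=\KT{k}(\pi_{\le n+j},\rvec)$ whenever the record stops early enough and $\ovec=\ovec(\pi_{\le n},\rvec)$ stays constant (no observation change occurs inside a path formula). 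Granting this, the $\U$ and $\X$ clauses match the natural semantics under the induction hypothesis. The crux cases are the two epistemic/dynamic clauses: for $\K_a\varphi$, I would use that $\KT{k}(h,\rvec)(a)=\{\KT{k-1}(h',\rvec)\mid h'\eqha[a]{\rvec}h\}$ by definition of $\KT{k}$, so quantifying over $\KT{k-1}$-children of the root is exactly quantifying over the $a$-indistinguishable histories $h'$, and since $\varphi$ has knowledge depth $k-1$ the induction hypothesis at depth $k-1$ closes the case; for $\D{o}_a\varphi$, the natural semantics passes to $h,\rvec\cdot(o,|h|-1)_a$, and the second identity of Proposition~\ref{lem-UTk} gives $\KT{k}(h,\rvec\cdot(o,|h|-1)_a)=\UDK{k}(\KT{k}(h,\rvec),o,a)$, which is precisely the $k$-tree appearing in the alternative clause, with $\ovec$ updated to $\ovec[a\leftarrow o]$.

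The main obstacle I expect is bookkeeping the knowledge depth together with the observation tuple simultaneously and consistently. Two points need care. First, in the $\K_a$ case the induction descends from depth $k$ to depth $k-1$, so the statement must be formulated for all $k$ at once (or proved by an outer induction on $k$ and an inner induction on formula structure); I would make sure the induction hypothesis is applied to $\varphi$ at its own knowledge depth, which is $\le k-1$, and note that a $k$-tree retains enough information to evaluate any formula of depth at most $k$. Second, one must verify that the current observation tuple is read off correctly: the claim implicitly requires that $\ovec(h,\rvec)=\lastovec(h,\rvec)$ is the tuple of last observations per agent, and in the $\D{o}_a$ case that replacing $\ovec_a$ by $o$ matches $\ovec(h,\rvec\cdot(o,|h|-1)_a)$, which follows directly from the definition of $\obslista$. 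Once these alignments between the record-based quantities ($\eqha[a]{\rvec}$, $\ovec(h,\rvec)$) and the tree-based quantities ($\KT{k}(h,\rvec)(a)$, $\UDK{k}$, $\UTK{k}$) are pinned down via Proposition~\ref{lem-UTk}, the induction goes through essentially as in the single-agent proof, and the full details are routine; they are deferred to Appendix~\ref{sec-app-b}.
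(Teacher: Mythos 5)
Your proposal is correct and follows exactly the route the paper itself takes: the paper states that Theorem~\ref{theo-equivalence-multi} ``can be proved similarly to Theorem~\ref{theo-equivalence}, using Proposition~\ref{lem-UTk} instead of Proposition~\ref{lem-UT}'', which is precisely your mutual induction on history and path formulas with the $k$-tree update identities replacing the information-set ones. Your two bookkeeping observations (descending knowledge depth in the $\K_a$ case, and the alignment $\ovec(h',\rvec)=\ovec(h,\rvec)$ for $\eqha[a]{\rvec}$-equivalent histories) are exactly the points where the single-agent argument needs adaptation, so the proposal is sound.
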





\section{Model checking \ctlskdm}
\label{sec:mc-m}


Like in the mono-agent case, it is rather easy to devise from this alternative
semantics a model-checking algorithm for \ctlskdm, the main difference
 being that the states of
the augmented model are now $k$-trees. 
In this section we adapt the model-checking procedure for \ctlskd to
the multi-agent setting, once again relying
 on the equivalence between the natural and alternative
semantics (Theorem~\ref{theo-equivalence-multi}), and we  prove the following result.

\begin{theorem}
  \label{theo-complex-multi}
  The model-checking problem for \ctlskdm is in $k$-\textsc{EXPTIME} for formulas
  of knowledge depth at most $k$.
\end{theorem}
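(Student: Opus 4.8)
The plan is to lift the single-agent construction of Section~\ref{sec:mc} to \ktrees, using the equivalence of the natural and alternative semantics (Theorem~\ref{theo-equivalence-multi}) as the bridge. Given a model $M$ and a formula $\phi$ of knowledge depth at most $k$, I would build a single augmented model $\hatM$ whose states are pairs $(\KT{k},\ovec)$ of a \ktree over $M$ and a tuple of current observations, directly generalising the triples $(s,I,o)$ of the one-agent case. The transitions follow $\UTK{k}$: from $(\KT{k},\ovec)$ there is an edge to $(\UTK{k}(\KT{k},s',\ovec),\ovec)$ for every $s'$ with $\racine(\KT{k})\,T\,s'$; the valuation inherits $V(\racine(\KT{k}))$; and the initial state is $(\KT{k}(\sinit,\eobsrecvec),\ovecinit)$. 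Proposition~\ref{lem-UTk} guarantees these transitions track the true update of \ktrees, and, as in the one-agent case, Theorem~\ref{theo-equivalence-multi} lets me read history formulas as state formulas over $\hatM$.

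Next I would generalise the marking procedure \textsc{Check}\ctlskd. Processing $\phi$ from innermost epistemic operators outward, each step isolates a subformula $\K_a\varphi_1$ or $\Da{o'}\varphi_1$ whose body $\varphi_1$ is a \ctls formula over the original atoms together with the fresh atoms introduced so far. I model-check $\varphi_1$ on $\hatM$ with a standard \ctls algorithm and mark the satisfying states with a fresh atom $p_{\varphi_1}$. Then, for $\K_a\varphi_1$, I mark $p_{\K_a\varphi_1}$ at $(\KT{k},\ovec)$ exactly when every element of the forest $\KT{k}(a)$ satisfies $\varphi_1$; and for $\Da{o'}\varphi_1$, I mark $p_{\Da{o'}\varphi_1}$ at $(\KT{k},\ovec)$ when $(\UDK{k}(\KT{k},o',a),\ovec[a\leftarrow o'])$ carries $p_{\varphi_1}$ — note that here \emph{both} the tree and the observation component are updated, while the height stays $k$. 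Finally I replace the subformula by its fresh atom and recurse. Correctness reduces, exactly as in Section~\ref{sec:mc}, to the two invariants $p_{\K_a\varphi_1}\in V'(\KT{k},\ovec)\iff \KT{k},\ovec\modelsI\K_a\varphi_1$ and its analogue for $\Da{o'}\varphi_1$, which follow from the alternative semantics and Proposition~\ref{lem-UTk}.

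For the complexity, the core is counting \ktrees. Letting $N_j$ be the number of $j$-trees over $M$, I would show by induction that $N_0=|M|$ and $N_{j+1}\le |M|\cdot 2^{m\, N_j}$, since a $(j+1)$-tree consists of a root in $S$ together with, for each of the $m$ agents, an arbitrary set of $j$-trees; hence $N_k$ is bounded by a tower of exponentials of height $k$ in $O(|M|)$, i.e. $N_k=\mathrm{tower}_k(O(|M|))$. With $|\Obs|^m$ observation tuples, $\hatM$ therefore has $\mathrm{tower}_k(O(|M|))$ states. The procedure performs at most $|\phi|$ \ctls model-checking calls, each in time $2^{O(|\phi|)}$ times the size of the relevant component; as in the one-agent case the states sharing a fixed observation tuple $\ovec$ form a $T'$-closed component, so the \ctls factor $2^{O(|\phi|)}$ only multiplies a single component rather than the whole model. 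Summing over calls and components keeps the total time within $\mathrm{tower}_k(O(|M|+|\phi|))$, which is $k$-fold exponential, establishing that model checking \ctlskdm is in $k$-\textsc{EXPTIME}.

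I expect the main obstacle to be the height bookkeeping that makes the $\K_a$ marking step sound: the forest $\KT{k}(a)$ contains $(k-1)$-trees, whereas the \ctls marks computed by the algorithm live on the $k$-trees that are the states of $\hatM$. The clean way to reconcile this is an invariance lemma, provable alongside Theorem~\ref{theo-equivalence-multi}, stating that the truth value of any subformula of knowledge depth $d$ at a \ktree and observation tuple depends only on the top $d$ levels of the \ktree (intuitively, because temporal updates on the deeper levels are never queried by knowledge operators of depth at most $d$, and $\UTK{k}$, $\UDK{k}$ preserve top-$d$ content). This lets the value of $\varphi_1$ on a $(k-1)$-tree appearing in a forest be read off from any $k$-tree sharing its top $k-1$ levels, and in particular from the reachable \ktrees of $\hatM$. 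The tower bound itself is a routine induction; the delicate point is verifying this invariance and matching each forest element to the correct augmented state, so that the level-by-level \ctls calls compose correctly and do not compound the tower.
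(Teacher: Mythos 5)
Your proposal matches the paper's proof essentially step for step: the same augmented model with states $(\KT{k},\ovec)$, transitions via $\UTK{k}$, the same marking procedure \textsc{Check}\ctlskdm{} for $\K_a$ and $\Da{o'}$ subformulas justified by Theorem~\ref{theo-equivalence-multi} and Proposition~\ref{lem-UTk}, and the same tower-of-exponentials count of \ktrees (which you derive by induction where the paper cites~\cite{DBLP:conf/fsttcs/MeydenS99}) combined with per-component \ctls{} calls. Your closing remark on the depth bookkeeping for the $\K_a$ marking step (forests contain $(k-1)$-trees while the \ctls{} marks live on $k$-trees) flags a genuine subtlety that the paper silently glosses over, and your invariance-lemma fix is a sound way to resolve it.
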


\head{Augmented model}
Given a model $M$, we define an augmented model $\hatM$ in which the
states are pairs $(\KT{k},\ovec)$ consisting of a $k$-tree $\KT{k}$ and an observation for each agent, $\ovec$.

Let
$M=(\APf,S,T,V,\{\eqstate{o}\}_{o\in\Obs},\sinit,\ovecinit)$. 
We define the Kripke structure $\hatM=(S',T',V',\sinit')$, where:
\begin{itemize}
\item $S'=\setktrees\times \mathcal{O}^\Ag$,
\item
  $(\KT{k},\ovec)~T'~({\KT{k}}',\ovec)\mbox{ if }
  s~T~s'$ and
   ${\KT{k}}'=\UTK{k}(\KT{k},s',\ovec)$, where $s=\racine(\KT{k})$ and
   $s'=\racine({\KT{k}}')$, 
\item $V'(\KT{k},\ovec)=V(\racine(\KT{k}))$, and 
  \item $\sinit' = (\KT{k}(\sinit,\eobsrecvec),\ovecinit)$. 
\end{itemize}
We call $\hatM$ the \emph{augmented model}, and we write $\hatMovec$
the Kripke structure obtained by restricting $\hatM$ to states of the
form $(\KT{k},\ovec')$ where $\ovec'=\ovec$. Note that the different
$\hatMovec$ are disjoint with regards to $T'$.

\vspace{1ex}
\head{Model-checking procedure}
We now define the function \textsc{Check}\ctlskdm\
which evaluates a history formula in a  state of the augmented model $\hatM$:  

Function
\textsc{Check}\ctlskdm($\hatM,(\KT{k}_c,\ovec_c),\Phi$) returns
\emph{true} if $M,\KT{k}_c,\ovec_c\modelsI\varphi$ and \emph{false}
otherwise, and is defined
as follows: if $\Phi$ is a \ctls formula, we evaluate it using a
classic model-checking procedure for \ctls. Otherwise, $\Phi$ contains a subformula
of the form $\phi=\Ka\varphi'$ or $\phi=\Da{o'}\varphi'$ where
$\varphi'\in \ctls$. We evaluate $\phi'$ in every state of 
$\hatM$, and 
mark those that satisfy $\varphi'$ with a fresh atomic proposition
$p_{\varphi'}$.  Then, if $\phi=\Ka\varphi'$, we mark with a fresh atomic
proposition $p_\phi$ every  state $(\KT{k},\ovec)$ of $\hatM$ such that
for every $\KT{k-1}\in \KT{k}(a)$, $(\KT{k-1},\ovec)$ is marked with $p_{\varphi'}$.  Else, 
$\phi=\Da{o'}\varphi'$ and we mark with a fresh proposition $p_{\phi}$
every state $(\KT{k},\ovec)$  such that
$(\UDK{k}(\KT{k},o',a),\ovec[a \leftarrow o'])$ is marked with $p_{\varphi'}$. Finally, we
recursively call function \textsc{Check}\ctlskdm on the marked model and formula
$\Phi'$ obtained by replacing $\phi$ with $p_\phi$ in $\Phi$. 

To model check a formula $\phi$ in a model $M$, we build $\hatM$ and call 
\textsc{Check}\ctlskdm($\hatM,(\KT{k}(\sinit,\eobsrecvec),\ovecinit),\phi$).

\vspace{1ex}
\head{Algorithm correctness}
The correctness of the algorithm follows from the following properties:
\begin{itemize}
\item For each formula $\Ka\varphi$ chosen by the algorithm,
  \\$p_{\phi}\in V'(\KT{k},\ovec)\iff
  M,\KT{k},\ovec\modelsI\Ka\varphi$
\item For each formula
  $\Da{o'}\varphi$ chosen by the algorithm,
  \\$p_{\phi}\in V'(\KT{k},\ovec)\iff M,\KT{k},\ovec\modelsI\Da{o'}\varphi$
\end{itemize}



\vspace{1ex}
\head{Complexity analysis} 
The number of different
\ktrees for $m$ agents and a model with $l$ states is no
greater than $C_k=\tower{m\times l}{k}/m$, where $\tower{a}{b}$ is defined
as $\tower{a}{0}=a$ and $\tower{a}{b+1}=a
2^{\scriptsize\tower{a}{b}}$~\cite{DBLP:conf/fsttcs/MeydenS99}.
The size of the augmented model $\hatM$  is thus bounded by $\tower{m\times l}{k}/m\times |\Obs|^{|\Ag|}$, and it can be
computed in time $\tower{O(m\times l)}{k}\times
|\Obs|^{|\Ag|}$. 

Model checking a \ctls formula $\phi$ on a model $M$ with state-set
$S$ can be done in time
$2^{O(|\phi|)}\times O(|S|)$~\cite{emerson1987modalities,kupferman2000automata}.
For a \ctlskdm formula $\phi$ of knowledge depth at most $k$ and a
model $M$ with $l$ states, our
procedure calls the
\ctls model-checking procedure for at most $|\phi|$ formulas of size
at most $|\phi|$, on each state of the augmented model $\hatM$ which has size
$\tower{m\times l}{k}/m\times |\Obs|^{m}$.
Each recursive call (for each subformula and state of $\hatM$) is 
performed  on a disjoint component $\hatMovec$ of size at most $\tower{m\times l}{k}/m$,
and thus takes time $2^{O(|\phi|)}\times O(\tower{m\times
  l}{k}/m)$, and there are at most $|\phi|\times
\tower{m\times l}{k}/m\times |\Obs|^{m}$ of them.
Our overall procedure thus runs in time $
|\Obs|^{m}\times2^{O(|\phi|)}\times\tower{O(m\times l)}{k}$, \ie
 $|\Obs|^{|\Ag|}\times 2^{O(|\phi|)}\times\tower{O(|\Ag|\times |M|)}{k}$.

\section{Expressivity}
\label{sec:expressivity}
In this section we prove that the observation-change operator adds
expressive power to epistemic temporal logics. Formally, we compare
the expressive power of \ctlskdm with that of
\ctlskm~\cite{DBLP:journals/siamcomp/HalpernMV04,DBLP:conf/fossacs/BozzelliMP15},
which is the syntactic fragment of \ctlskdm obtained by removing the
observation-change operator. 
Our semantics for \ctlskdm generalises that of \ctlskm, with which it
coincides on \ctlskm formulas. Note that our multi-agent models
 (Definition~\ref{def-multi-model}) are more general than usual models for \ctlskm, as they
 may contain observation
relations that are not initially assigned to any agent, but such relations
are mute in the evaluation of \ctlskm formulas.


For two logics $\lang$ and $\lang'$ over the same class of models, we say that $\lang'$ is \emph{at least as expressive as}
$\lang$, written $\lang\lexpr\lang'$, if for every formula
$\phi\in\lang$ there exists a formula $\phi'\in\lang'$ such that $\phi\equiv\phi'$. $\lang'$
is \emph{strictly more expressive than} $\lang$,  written $\lang\slexpr\lang'$,  if $\lang\lexpr\lang'$
and $\lang'\not\lexpr\lang$. Finally, $\lang$ and $\lang'$ are
\emph{equiexpressive}, written $\lang\equiexpr\lang'$, if
$\lang\lexpr\lang'$ and $\lang'\lexpr\lang$.

\begin{proposition}
  \label{prop-more-expressive}
For all $m\geq 1$,  $\ctlskm \lexpr \ctlskdm$.
\end{proposition}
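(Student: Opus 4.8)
The plan is to exhibit, for each $\phi\in\ctlskm$, a $\ctlskdm$ formula $\phi'$ with $\phi\equiv\phi'$, and the key point is that no genuine translation is needed: since $\ctlskm$ is \emph{defined} as the fragment of $\ctlskdm$ without the operator $\Da{o}$, every $\ctlskm$ formula $\phi$ is already a syntactically well-formed $\ctlskdm$ formula. I would therefore simply set $\phi'\egdef\phi$ and prove $\phi\equiv\phi'$, i.e.\ that the two logics agree on $\phi$ over the common class of models of Definition~\ref{def-multi-model}. For this it suffices to show that the natural $\ctlskdm$ semantics, started from the empty record tuple $\eobsrecvec$, coincides with the $\ctlskm$ semantics on observation-change-free formulas.

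The key observation is that, because $\phi$ contains no operator $\Da{o}$, no clause of Definition~\ref{def-nat-semantics-multi} ever extends a record during the evaluation of $\phi$ or any of its subformulas. Hence, starting from $\eobsrecvec$, the record tuple stays empty throughout. For the empty record, the defining induction for $\obslista$ gives immediately $\obslista(\eobsrecvec,i)=\ovecinit_a$ for every agent $a$ and every time $i$, so the dynamic indistinguishability relation collapses: $h\eqha{\eobsrecvec}h'$ holds iff $|h|=|h'|$ and $h_i\eqstate{\ovecinit_a}h'_i$ for all $i<|h|$. This is exactly the static synchronous perfect-recall relation that $\ctlskm$ uses for agent $a$ with her fixed observation $\ovecinit_a$.

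With this in hand I would prove, by a routine induction on the structure of the $\ctlskm$ history formula $\phi$ (with the usual auxiliary statement for path formulas evaluated along a run), that for every history $h$ one has $h,\eobsrecvec\models\phi$ in $\ctlskdm$ iff $h$ satisfies $\phi$ in the $\ctlskm$ semantics. All Boolean, temporal, and path-quantifier cases are immediate, as none of them inspects or alters the record. The only case of interest is $\K_a\varphi$, which unfolds to a quantification over histories $h'$ with $h'\eqha{\eobsrecvec}h$; here the equality of indistinguishability relations noted above, together with the induction hypothesis applied to $\varphi$, yields the claim. Evaluating at the initial state with the empty record then gives $M\models\phi$ under $\ctlskdm$ iff $M\models\phi$ under $\ctlskm$, which is precisely $\phi\equiv\phi'$.

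I expect no real obstacle: the statement is, in effect, a sanity check that the $\ctlskdm$ semantics conservatively extends the $\ctlskm$ one. The single point demanding care is that $\equiv$ is taken over the more general models of Definition~\ref{def-multi-model}, which may carry observation relations assigned initially to no agent. Since $\phi$ never references such relations and they remain \emph{mute} in its evaluation, while the two semantics agree on the relations that \emph{are} assigned initially, the coincidence holds uniformly over the whole model class, completing the argument.
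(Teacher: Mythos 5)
Your proposal is correct and takes essentially the same approach as the paper: the paper's proof is the one-line observation that $\ctlskdm$ extends $\ctlskm$, using the identity translation $\phi'\egdef\phi$ and relying on the remark (stated just before the proposition) that the $\ctlskdm$ semantics coincides with the $\ctlskm$ semantics on observation-change-free formulas, with unassigned observation relations being mute. Your write-up simply fills in the details the paper leaves implicit — that empty record tuples stay empty, so $\eqha{\eobsrecvec}$ collapses to static synchronous perfect recall, and the induction goes through.
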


\begin{proof}
This simply follows from the fact that \ctlskdm extends
 \ctlskm. 
\end{proof}

We first point out that when there is only one observation, \ie,
$|\Obs|=1$, the observation-change operator has no effect, and thus
\ctlskdm is no more expressive than \ctlskm.

\begin{proposition}
  \label{prop-equiexpr}
  For $|\Obs|=1$, $\ctlskm\equiexpr\ctlskdm$.
\end{proposition}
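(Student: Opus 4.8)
The plan is as follows. Proposition~\ref{prop-more-expressive} already gives $\ctlskm\lexpr\ctlskdm$, so it remains to establish the converse inclusion $\ctlskdm\lexpr\ctlskm$ in the case $|\Obs|=1$. To this end I would associate to each \ctlskdm formula $\phi$ the \ctlskm formula $\tilde\phi$ obtained by erasing every observation-change operator (recursively replacing each subformula $\Da{o}\varphi$ by $\varphi$), and prove that $\phi\equiv\tilde\phi$. Since $\tilde\phi$ contains no observation-change operator it is a genuine \ctlskm formula, so this suffices.

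The key observation, writing $\Obs=\{o\}$ for the unique observation, is that the indistinguishability relations $\eqha{\rvec}$ do not depend on the record tuple $\rvec$. Indeed, each agent's initial observation is forced to be $o$, and every change recorded in $\rvec$ can only be a change to $o$; hence for every agent $a$ and every time $i$ the list $\obslista(\rvec,i)$ consists exclusively of copies of $o$. The defining condition of $h\eqha{\rvec}h'$ therefore collapses to ``$|h|=|h'|$ and $h_i\eqstate{o}h'_i$ for all $i<|h|$'', which is independent of $\rvec$.

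From this I would prove, by a simultaneous induction on history and path formulas, that the truth of any formula is independent of the observation record: for all records $\rvec,\rvec'$ one has $h,\rvec\models\phi$ iff $h,\rvec'\models\phi$, and likewise for path formulas along a run. The Boolean, path-quantifier and temporal cases are immediate, since they pass the record through unchanged. The case $\K_a\varphi$ uses the observation above: the set of histories $h'$ with $h'\eqha{\rvec}h$ does not depend on $\rvec$, and the inner truth values are record-independent by the induction hypothesis. The case $\Da{o}\varphi$ is where the record is touched: by definition $h,\rvec\models\Da{o}\varphi$ iff $h,\rvec\cdot(o,|h|-1)_a\models\varphi$, and since the appended record again stops at $h$, the induction hypothesis applied to $\varphi$ shows this holds iff $h,\rvec\models\varphi$; in particular $\Da{o}\varphi\equiv\varphi$.

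With record-independence in hand, a final straightforward induction on $\phi$, using the equivalence $\Da{o}\varphi\equiv\varphi$ at the erasure step, yields $\phi\equiv\tilde\phi$ and hence $\ctlskdm\lexpr\ctlskm$. I do not expect any genuine obstacle here: the only points requiring care are the bookkeeping of the mutual induction between history and path formulas, and the initial observation that with a single observation the record never alters indistinguishability.
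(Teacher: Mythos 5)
Your proof is correct and follows essentially the same route as the paper: one direction comes from Proposition~\ref{prop-more-expressive}, and the converse is obtained by erasing the observation-change operators after observing that with $|\Obs|=1$ the observation records have no effect on the indistinguishability relations. Your record-independence induction merely spells out what the paper dismisses as ``easy to see''.
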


\begin{proof}
  We show that  for $|\Obs|=1$, $\ctlskdm\lexpr\ctlskm$, which
  together with Proposition~\ref{prop-more-expressive} provides the
  result.
  Observe that when $|\Obs|=1$, observation change has no effect, and
  in fact observation records can be omitted in
  the natural semantics. For every
  \ctlskdm formula  $\phi$, define the \ctlskm formula
  $\phi'$ by removing all observation-change operators $\Da{o}$ from
  $\phi$. It is easy to see that $\phi\equiv\phi'$.
\end{proof}

We now show that as soon as we have at least two observations, 
the observation-change operator adds expressivity. 
We first consider the mono-agent case.

\begin{figure}
  \centering
  \begin{tikzpicture}[%
        every node/.style={circle,minimum size=4pt,minimum height=4pt,
          inner sep=2pt, scale=1},
        shorten >=2pt,
        node distance=0.5cm, >=latex,
        every text node part/.style={align=center}
      ]
      \node [] (0) [initial above, initial by arrow, initial text={},rectangle] {$s_0$\\$\neg p~$};
      \node [] (1) [rectangle, below=of 0] {$s_2$\\$\neg p$};
      \node [] (2) [rectangle, left=of 1,xshift=-0.8cm] {$s_1$\\$\neg p$};
      \node [] (3) [rectangle, right=of 1,xshift=0.8cm] {$s_3$\\$\neg p$};
      \node [] (4) [rectangle, below=of 1] {$s_5$\\$p$};
      \node [] (5) [rectangle, below=of 2] {$s_4$\\$\neg p$};
      \node [] (6) [rectangle, below=of 3] {$s_6$\\$\neg p$};
      \path 
      (0) edge[->]  node {} (1)
      (0) edge[->]  node {} (2)
      (0) edge[->]  node {} (3)
      (1) edge[->]  node {} (4)
      (2) edge[->]  node {} (5)
      (3) edge[->]  node {} (6)
      (4) edge[->, loop below]  node {} (4)
      (5) edge[->, loop below]  node {} (5)
      (6) edge[->, loop below]  node {} (6);
      \path (4)--node[sloped]{$\eqstate{o_1,\textcolor{red}{o_2}}$}(5);
      \path (1)--node[sloped]{$\eqstate{o_1}$}(2);
      \path (1)--node[sloped]{$\eqstate{o_2}$}(3);
      \path (4)--node[sloped]{$\eqstate{o_2,o_1}$}(6);
    \end{tikzpicture}
    \caption{Model $M'$ in the proof of Proposition~\ref{prop-mono-more-expressive}}
    \label{fig:proof}
\end{figure}

\begin{proposition}
  \label{prop-mono-more-expressive}
  If $|\Obs|>1$ then $\ctlskd\not\lexpr\ctlsk$.
\end{proposition}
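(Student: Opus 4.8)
The plan is to separate the two logics by a single example: I will exhibit one \ctlskd formula $\phi$ together with two models that agree on every \ctlsk formula but disagree on $\phi$. Concretely, I take $\phi=\D{o_1}\E\F\D{o_2}\K p$, the formula from the security scenario of Example~\ref{ex-security}, and I compare model $M$ of Figure~\ref{fig:ex} with the model $M'$ of Figure~\ref{fig:proof}, which is obtained from $M$ by adding the single pair $s_4\eqstate{o_2}s_5$ to the $o_2$-relation (the red label). Since the two models differ only in their $o_2$-relation, the whole argument rests on showing (i) that $\phi$ tells them apart, and (ii) that no \ctlsk formula can, because \ctlsk never gets to use $o_2$.

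For (i), the fact that $M\models\phi$ is exactly the content of Example~\ref{ex-security}, so I only need $M'\not\models\phi$. First observe that, in either model, a history $h$ can satisfy $\D{o_2}\K p$ only if $\last(h)$ is a $p$-state, since the current state always lies in the information set and $\K p$ requires every state of the set to satisfy $p$; as $p$ holds only at $s_5$, it suffices to inspect histories ending in $s_5$. Each such history has the form $s_0 s_2 s_5\cdots s_5$, and a short computation (via Proposition~\ref{lem-UT}) shows that its information set under the initial observation $o_1$ is $\{s_4,s_5\}$: the only $o_1$-indistinguishable history is $s_0 s_1 s_4\cdots s_4$, while $s_0 s_3 s_6\cdots$ is excluded because $s_3\not\eqstate{o_1}s_2$. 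Switching to $o_2$ refines this set to $\{s_4,s_5\}\inter\eqc[o_2]{s_5}$. In $M$ this equals $\{s_5\}$, so $\K p$ holds; but in $M'$ we added $s_4\eqstate{o_2}s_5$, so $\eqc[o_2]{s_5}\supseteq\{s_4,s_5\}$ and the refined set is still $\{s_4,s_5\}$. Since $s_4$ is a $\neg p$-state, $\K p$ fails in $M'$. Hence no reachable history of $M'$ satisfies $\D{o_2}\K p$, so $M'\not\models\E\F\D{o_2}\K p$ and therefore $M'\not\models\phi$.

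For (ii), I will argue that $M$ and $M'$ satisfy exactly the same \ctlsk formulas. The point is that a \ctlsk formula contains no $\D$-operator, so along its evaluation the observation record stays empty and $\obslist(\eobsrec,i)=\oinit=o_1$ for every $i$; consequently the history-indistinguishability relation $\eqh{\eobsrec}$, and hence the knowledge operator, only ever consults $\eqstate{o_1}$. Since $M$ and $M'$ share the same $S$, $T$, $V$, $\eqstate{o_1}$, $\sinit$ and initial observation $o_1$, and differ only in $\eqstate{o_2}$, a routine induction on \ctlsk formulas $\chi$ shows that $h,\eobsrec\models\chi$ holds in $M$ if and only if it holds in $M'$, for every history $h$; taking $h=\sinit$ yields that $M$ and $M'$ agree on all \ctlsk formulas.

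Combining the two parts finishes the argument: if some \ctlsk formula $\phi'$ were equivalent to $\phi$, then from $M\models\phi$ and $M'\not\models\phi$ we would obtain $M\models\phi'$ and $M'\not\models\phi'$, contradicting (ii). Hence $\phi$ has no \ctlsk equivalent and $\ctlskd\not\lexpr\ctlsk$. I expect the only delicate step to be the exhaustive verification in (i) that \emph{no} history of $M'$ satisfies $\D{o_2}\K p$: one must check that the sole way to reach the unique $p$-state $s_5$ forces the information set $\{s_4,s_5\}$ under $o_1$, and that the added pair $s_4\eqstate{o_2}s_5$ is precisely what prevents the $o_2$-refinement from isolating $s_5$.
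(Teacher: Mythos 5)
Your proposal is correct and takes essentially the same route as the paper's own proof: the same two models $M$ (Figure~\ref{fig:ex}) and $M'$ (Figure~\ref{fig:proof}), the same distinguishing formula (the paper uses $\E\F\D{o_2}\K p$, dropping your redundant initial $\D{o_1}$, which is harmless since $o_1$ is the initial observation), and the same key point that no \ctlsk formula can consult $\eqstate{o_2}$ because the sole agent is bound to $o_1$. Your write-up simply makes explicit the information-set computation via Proposition~\ref{lem-UT} and the induction over \ctlsk formulas that the paper leaves as a one-line remark.
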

\begin{proof}
Assume that $\Obs$ contains $o_1$ and $o_2$.  Consider the model
$M$ from Example~\ref{ex-security} (Figure~\ref{fig:ex}), and define
the model $M'$ which is the same as $M$ except that $s_4$ and $s_5$
are indistinguishable for both $o_1$ and $o_2$, while in $M$ they are
only indistinguishable for $o_1$ (see Figure~\ref{fig:proof}). In both
models, agent $a$ is initially assigned observation $o_1$. 
We exhibit a formula of \ctlskd that can
distinguish between $M$ and $M'$, and justify that no formula of
\ctlsk can, which shows that $\ctlskd\not\lexpr\ctlsk$.

  Consider  formula $\phi=\E F \D{o_2} \K_a p$. With a reasoning similar
to that  detailed in
  Example~\ref{ex-security},
  we can show  that  $M\models\phi$. We now show that
  $M'\not\models\phi$:  The only history in which agent $a$ may
  get to know that $p$ holds is the path $s_0s_2s_5$, since in other
  histories $p$ does not hold. After observing this path with
  observation $o_1$, agent $a$  considers that both $s_4$ and $s_5$ are
  possible. She still does after switching to observation $o_2$, as
  $s_4$ and $s_5$ are $o_2$-indistinguishable. As a result
  $M'\not\models\phi$, and thus $\phi$ distinguishes $M$ and $M'$.

  Now to see that no formula of \ctlsk can distinguish between these two
  models, it is enough to see that in both models the only agent $a$
  is assigned observation $o_1$, and thus on these models no operator of \ctlsk can
  refer to observation $o_2$, which is the only difference between $M$
  and $M'$.  
\end{proof}

  This proof for the mono-agent case relies on the fact that \ctlskd
  can refer to observations that are not initially assigned to any
  agent, and thus cannot be referred to within \ctlsk. The same proof
  can be easily adapted to the multi-agent case, by considering the
  same models $M$ and $M'$ and assigning the same initial observation
  $o_1$ to all agents. We show that in fact, when we have at least two
  agents, \ctlskdm is strictly more expressive than
  \ctlskm even when we only consider  models in which all observations are
  initially assigned to some agent. 

  \begin{proposition}
  \label{prop-poly-more-expressive}
  If $|\Obs|>1$ and $m\geq 2$, it holds that $\ctlskdm\not\lexpr\ctlskm$ even when
  restricted to models where all observations are initially assigned.
\end{proposition}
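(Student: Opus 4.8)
The plan is to reuse the two models $M$ (Figure~\ref{fig:ex}) and $M'$ (Figure~\ref{fig:proof}) from the proof of Proposition~\ref{prop-mono-more-expressive}, but now in the two-agent setting with $\Ag=\{a,b\}$, assigning the initial observation $o_1$ to agent $a$ and $o_2$ to agent $b$. With this choice every observation of $\Obs$ is initially assigned to some agent, as required by the statement. As the distinguishing formula I would take the same $\phi=\E F \D{o_2}_a\K_a p$ as in the mono-agent proof. Since $\phi$ mentions only agent $a$, and agent $a$'s indistinguishability depends solely on $o_1$, which is identical in $M$ and $M'$, the argument of Proposition~\ref{prop-mono-more-expressive} applies verbatim: $M\models\phi$ and $M'\not\models\phi$, the presence of agent $b$ playing no role in the evaluation of $\phi$.

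The real work is to show that \emph{no} \ctlskm formula separates $M$ from $M'$. Because the two models share the same states, transitions and valuation --- they differ only in the relation $\eqstate{o_2}$, where $M'$ additionally makes $s_4$ equivalent to $s_5$ (and hence, by transitivity, to $s_6$) --- histories are literally the same objects in both models, and an induction on \ctlskm formulas reduces everything to the epistemic clauses. In \ctlskm agent $a$ always observes through $o_1$ and agent $b$ always through $o_2$. For $\K_a$ nothing changes, since $o_1$ is untouched. The crux is therefore to prove that agent $b$'s synchronous perfect-recall indistinguishability relation also coincides in $M$ and $M'$.

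To establish this I would argue that the extra $o_2$-equivalences of $M'$ never create a new pair of histories that agent $b$ confuses. The only new pairs of states are $(s_4,s_5)$ and $(s_4,s_6)$. Now $s_4$ is reachable only through $s_1$ at time $1$, whereas $s_5$ and $s_6$ are reachable only through $s_2$ and $s_3$ respectively; and $s_1$ is $o_2$-distinguishable from both $s_2$ and $s_3$, its $o_2$-class being a singleton in both models. Hence any two histories that would be newly related at some later time because of the edge $s_4\eqstate{o_2}s_5$ already differ at time $1$ under $o_2$, so synchronous perfect recall keeps them apart in $M'$ exactly as in $M$. Consequently the indistinguishability relations of agents $a$ and $b$ are identical in the two models, the structural induction goes through, and every \ctlskm formula takes the same truth value on corresponding histories of $M$ and $M'$. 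Since $\phi$ does separate them, $\phi$ is equivalent to no \ctlskm formula, which gives $\ctlskdm\not\lexpr\ctlskm$. I expect this last verification --- that perfect recall neutralises the added $o_2$-edge because $s_4$'s branch is kept apart from the $s_5,s_6$ branch already at time $1$ --- to be the main obstacle; everything else is a routine induction.
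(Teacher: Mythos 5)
Your proposal is correct and follows essentially the same route as the paper: same two models $M$ and $M'$ with $o_1$ assigned to agent $a$ and $o_2$ to agent $b$, the same distinguishing formula $\E\F\D{o_2}_a\K_a p$, and the same key observation that the added $o_2$-edges ($s_4$ with $s_5$, and by transitivity $s_6$) create no new indistinguishable history pairs under synchronous perfect recall, since the branches already separate at time $1$ where $s_1$'s $o_2$-class is a singleton. The only cosmetic difference is that the paper packages the final step as invariance of \ctlskm under a ``perfect-recall unfolding'' of the models, whereas you argue directly that the history-level indistinguishability relations of both agents coincide and then run a structural induction on formulas --- the same argument in substance.
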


\begin{proof}
Assume that $\Obs$ contains $o_1$ and $o_2$. We consider the case of
two agents $a$ and $b$ ($m=2$); the proof can easily be generalised to more
agents. Consider again the models
$M$ and $M'$ used in the proof of
Proposition~\ref{prop-mono-more-expressive}. This time, in both
models, agent $a$ is initially assigned observation $o_1$  and agent $b$
 observation $o_2$.
For the same reasons as before,  formula $\phi=\E F \D{o_2} \K_a p$
distinguishes between $M$ and $M'$.

  Now to see that no formula of \ctlskm can distinguish these two
  models, recall that the only difference between $M$ and $M'$
  concerns observation $o_2$, and that agents $a$ and $b$ are bound to
  observations $o_1$ and $o_2$ respectively. Since in \ctlskm agents
  cannot change observation, the modification of $o_2$ between $M$ and
  $M'$ can only affect the knowledge of agent $b$, by letting her
  consider in $M'$ that history $s_0s_1s_4$ is indistinguishable to
  history $s_0s_2s_5$. However this is not the case: indeed, $s_1$
  and $s_2$ are not $o_2$-indistinguishable, and because we consider perfect
  recall, $s_0s_1s_4$ and $s_0s_2s_5$ are not $o_2$-indistinguishable
  neither, even though $s_4$ and $s_5$ are.

  More formally, define the \emph{perfect-recall unfolding}  of a model $M$ as the infinite tree
  consisting of all possible histories starting in the initial state,
  in which two nodes $h$ and $h'$ are related for $o_i$ if $|h|=|h'|$
  and for all $i<|h|$, $h_i\eqstate{o_i}h'_i$. By its semantics it is
  clear that
  $\ctlskm$ is invariant under perfect-recall unfolding. Now it
  suffices to notice that the perfect-recall unfoldings of $M$ and $M'$  are the same, and thus
  cannot be distinguished by any \ctlskm formula.
\end{proof}

\begin{remark}
  \label{rem-unfoldings}
  Unlike \ctlskm,  \ctlskdm is not invariant under the perfect-recall
  unfoldings considered in the proof of
  Proposition~\ref{prop-poly-more-expressive}. The reason is that
  in these unfoldings, observation relations on histories are defined
  for fixed observations, and therefore cannot account for 
  observation changes induced by operators $\Da{o}$.
\end{remark}

Putting together
Propositions~\ref{prop-more-expressive},~\ref{prop-mono-more-expressive}
and~\ref{prop-poly-more-expressive}, we obtain:

\begin{theorem}
  \label{theo-expressive-power}
If $|\Obs|>1$ then $\ctlskm \slexpr \ctlskdm$.
\end{theorem}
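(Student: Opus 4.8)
The plan is to unfold the definition of $\slexpr$ and then assemble the three preceding propositions. By definition, $\ctlskm\slexpr\ctlskdm$ means both $\ctlskm\lexpr\ctlskdm$ and $\ctlskdm\not\lexpr\ctlskm$, so I would establish these two directions separately.

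The first direction, $\ctlskm\lexpr\ctlskdm$, is exactly Proposition~\ref{prop-more-expressive} instantiated for the given $m$, since \ctlskdm is a syntactic superset of \ctlskm and the two semantics agree on \ctlskm formulas; nothing further is needed here.

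For the second direction, $\ctlskdm\not\lexpr\ctlskm$, I would split on the number of agents. When $m=1$, the logics \ctlskdm and \ctlskm coincide with \ctlskd and \ctlsk respectively, so the claim is precisely Proposition~\ref{prop-mono-more-expressive}, which applies because $|\Obs|>1$. When $m\geq 2$, the claim is Proposition~\ref{prop-poly-more-expressive}, again available because $|\Obs|>1$ (and in fact it delivers the stronger statement that the separation persists even on models where every observation is initially assigned). Since these two cases cover all $m\geq 1$, I obtain $\ctlskdm\not\lexpr\ctlskm$ in every case, and combined with the first direction this yields $\ctlskm\slexpr\ctlskdm$.

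The theorem itself carries no real difficulty: all the work has already been done in the three propositions. The genuine obstacle lies upstream, in the $\not\lexpr$ directions---particularly Proposition~\ref{prop-poly-more-expressive}, where one must exhibit a single distinguishing formula ($\E F \D{o_2}\K_a p$) and then argue that no \ctlskm formula can tell the two models apart. That latter argument rests on the invariance of \ctlskm under perfect-recall unfolding, which \ctlskdm lacks (Remark~\ref{rem-unfoldings}), together with the fact that the two models in question have identical unfoldings. The only point requiring care at the level of this theorem is the notational identification of the $m=1$ instances of \ctlskdm and \ctlskm with \ctlskd and \ctlsk, which is what justifies invoking Proposition~\ref{prop-mono-more-expressive} in the $m=1$ case.
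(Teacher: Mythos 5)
Your proof is correct and matches the paper's own argument exactly: the paper also obtains the theorem by ``putting together'' Propositions~\ref{prop-more-expressive}, \ref{prop-mono-more-expressive} and~\ref{prop-poly-more-expressive}, with the same implicit case split on $m=1$ versus $m\geq 2$ for the $\not\lexpr$ direction. Your explicit note identifying the $m=1$ instances of \ctlskdm and \ctlskm with \ctlskd and \ctlsk is the only detail the paper leaves unstated, and it is handled correctly.
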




\section{Eliminating observation change}
\label{sec:reduction}
In this section we show how to reduce the model-checking problem for
\ctlskd to that of \ctlsk. The approach can be easily generalised to
the multi-agent case.

Fix an instance $(M,\Phi)$ of the model-checking problem for \ctlskd,
where $M$ is a
(mono-agent) model and $\Phi$ is a \ctlskd formula. We build an
equivalent instance $(M',\Phi')$ of the model-checking problem for
\ctlsk; in particular, $M'$ contains a single observation relation,
and $\Phi'$ does not use the observation-change operator~$\D{o}$.

Assume that
$M=(\APf,S,T,V,\{\eqstate{o}\}_{o\in\Obs},\sinit,\oinit)$. We first define the model $M'$. For each observation symbol
$o\in\Obs$ we create a copy $M_o$ of the
original model $M$. Moving to copy $M_o$ will simulate switching to
observation $o$. To make this possible, we need to introduce
transitions between each state $s_o$ of a copy $M_o$ to state
$s_{o'}$ of copy $M_{o'}$, for all $o\neq o'$.  

Let $M'=(\APf\union\{p_o\mid
o\in\Obs\},S',T',V',\eqstate{}',\sinit')$, where
\begin{itemize}
\item for each $o\in\Obs$, $p_o$ is a fresh atomic proposition,
\item $S'=\bigunion_{o\in\Obs}\{s_o\mid s\in S\}$,
\item $T'=\{(s_o,s'_o)\mid o\in \Obs \text{ and }(s,s')\in T\}$

  \hspace{25pt}$\union\;
  \{(s_o,s_{o'})\mid s\in S, o,o'\in\Obs \text{ and }o\neq o'\}$
\item $V'(s_o)=V(s)\union\{p_o\}$, for all $s\in S$ and $o\in\Obs$,
\item $\eqstate{}'= \bigunion_{o\in\Obs}\{(s_o,s'_o) \mid s
  \eqstate{o} s'\}$, and
  \item $\sinit'=\sinit_{\oinit}$.
\end{itemize}

We now define formula $\Phi'$. The
transformation $\tro$ is parameterised with an observation $o\in\Obs$
and is defined by induction on $\Phi$ as follows:
\begin{align*}
  \tro(\D{o'}\phi) &=
                     \begin{cases}
                       \tro[o'](\phi) & \text{if }o=o'\\
                       \A\X(p_{o'}\rightarrow \tro[o'](\phi)) & \text{otherwise}
                     \end{cases}\\
  \tro(\A\psi) & = \A (\G p_o \impl \tro(\psi))
\end{align*}

All other cases simply distribute over operators. We finally let $\Phi'=\tro[\oinit](\Phi)$.

Using the alternative semantics it is rather easy to see that the
following holds:

\begin{lemma}
  \label{lem-reduction}
  $M\models\Phi$ if, and only if, $M'\models\Phi'$.
\end{lemma}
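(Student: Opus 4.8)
The plan is to avoid relating the natural semantics of $M$ to $M'$ directly, and instead route through the alternative (information-set) semantics $\modelsI$ of Definition~\ref{def-alt-semantics}, which Theorem~\ref{theo-equivalence} already shows equivalent to the natural one. The central object is a correspondence between a history $\eta$ of $M'$ ending in the copy $M_o$ (i.e.\ $\last(\eta)=s_o$) and the triple $(s,J_\eta,o)$, where $J_\eta\subseteq S$ is the set of $M$-states $t$ such that $t_o$ is the last state of some $M'$-history indistinguishable from $\eta$. Because $\eqstate{}'$ relates only states carrying the same tag $p_o$, every history indistinguishable from $\eta$ ends in $M_o$ as well, so the tag-sequence is in fact determined and $J_\eta$ is well defined (the agent in $M'$ always knows her current observation). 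I would then prove, by induction on the subformulas $\phi$ of $\Phi$, the invariant
\[
\eta \models_{M'} \tro(\phi) \quad\text{iff}\quad s,\,J_\eta,\,o \modelsI \phi ,
\]
where $\modelsI$ is the alternative semantics over $M$, together with the analogous statement for path formulas evaluated along runs of $M'$ that stay inside $M_o$.

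The induction is driven by two update lemmas matching the dynamics of $M'$ to the functions $\UT$ and $\UD$ of Definition~\ref{def-updates}, which I would isolate first, since every modal case reduces to them. Along a model transition $s_o\to s'_o$ inside $M_o$, synchronous perfect recall gives $J_{\eta\cdot s'_o}=T(J_\eta)\cap\eqc[o]{s'}=\UT(J_\eta,s',o)$, because the only $T'$-predecessors of an indistinguishable continuation carry tag $o$ and are therefore model-transition predecessors. Along the observation-change transition $s_o\to s_{o'}$ with $o\neq o'$, the only valid continuation keeps the underlying state fixed, whence $J_{\eta\cdot s_{o'}}=J_\eta\cap\eqc[o']{s}=\UD(J_\eta,s,o')$. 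The boolean and atomic cases are then immediate (the fresh propositions $p_o$ never occur in $\Phi$, so original atoms behave identically in $V$ and $V'$), and the $\K$ case uses the S5 fact that indistinguishable histories share an information set: as $\eta'$ ranges over histories indistinguishable from $\eta$ its root ranges over all of $J_\eta$ while $J_{\eta'}=J_\eta$ stays fixed, so $\K$ distributes correctly.

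The delicate cases, and the main obstacle, are $\D{o'}\varphi$ and $\A\psi$, precisely where $\tro$ is non-trivial. For $\D{o'}\varphi$ with $o\neq o'$ I must check that $\A\X(p_{o'}\impl\tro[o'](\varphi))$ isolates exactly the move $s_o\to s_{o'}$: model transitions stay at tag $o$, and observation changes to tags $p\neq o'$ falsify $p_{o'}$, so $s_{o'}$ is the unique $p_{o'}$-successor; combined with the second update lemma this matches the clause $s,J_\eta,o\modelsI\D{o'}\varphi$ iff $s,\UD(J_\eta,s,o'),o'\modelsI\varphi$. The case $o=o'$ is consistent because $J_\eta\subseteq\eqc[o]{s}$ forces $\UD(J_\eta,s,o)=J_\eta$. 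For $\A\psi$, the guard in $\A(\G p_o\impl\tro(\psi))$ restricts attention to runs of $M'$ that never leave $M_o$; these use only model transitions and project bijectively onto the runs of $M$ from $s$, while every run that ever changes observation falsifies $\G p_o$ and satisfies the implication vacuously. Iterating the first update lemma shows the $M'$ information-set updates track $\UT^n$, so the $M'$-evaluation of $\tro(\psi)$ coincides with the $\modelsI$-evaluation of $\psi$ at fixed observation $o$. Finally, I instantiate the invariant at the one-state initial history $(\sinit)_{\oinit}$, whose projected information set is $\eqc[\oinit]{\sinit}$, and invoke Theorem~\ref{theo-equivalence} at $\sinit,\eobsrec$ (whose information set is likewise $\eqc[\oinit]{\sinit}$ and whose observation is $\oinit$); since $\Phi'=\tro[\oinit](\Phi)$ and $\sinit'=\sinit_{\oinit}$, this yields $M'\models\Phi'$ iff $M\models\Phi$.
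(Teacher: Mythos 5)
Your proof is correct and follows exactly the route the paper intends: the paper gives no detailed argument for Lemma~\ref{lem-reduction}, asserting only that it is ``rather easy to see'' using the alternative semantics, and your invariant $\eta \models_{M'} \tro(\phi)$ iff $s, J_\eta, o \modelsI \phi$, together with the two update lemmas matching $M'$-transitions to $\UT$ and $\UD$, is precisely the elaboration of that remark (including the key checks that $s_{o'}$ is the unique $p_{o'}$-successor, that $\G p_o$ isolates the runs staying in $M_o$, and that $\UD(J_\eta,s,o)=J_\eta$ in the $o=o'$ case). Nothing further is needed.
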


This establishes the correctness of the reduction, and because we know
how to model-check \ctlsk, provides a model-checking procedure for
\ctlskd. However this algorithm does not provide optimal
complexity. Indeed, the model $M'$ is of size $|M|\times|\Obs|$, and the
best known model-checking algorithm for \ctlsk runs in time
exponential in the size of the model and the size of the
formula~\cite{BOZZELLI201580}. Going through this reduction thus
yields a procedure that is
exponential in the number of observations. Our direct model-checking
procedure, which generalises techniques used for the classic case of
static observations, provides instead a decision procedure which is
only polynomial in the number of observations (Theorem~\ref{theo-compl}).

The reduction described above can be easily generalised to the
multi-agent case, by creating one copy $M_{\ovec}$ of the original model
$M$ for each possible assignment  $\ovec$ of observations to agents. We
thus get a model $M'$ of size $|M|\times{|\Obs|^{|\Ag|}}$, and since the best
known model-checking procedure for \ctlskm is $k$-exponential in the
size of the model~\cite{BOZZELLI201580}, this reduction provides a procedure
which is $k$-exponential in the number of observations and
$k+1$-exponential in the number of agents. The direct approach
provides an algorithm that is only polynomial in the number of
observations, exponential in the number of agents, and whose combined
complexity is $k$-exponential time (Theorem~\ref{theo-complex-multi}).


\section{Conclusion and future work}
\label{sec:conclusion}
Epistemic temporal logics play a central role in MAS 
as they permit to describe in an elegant  way the
knowledge of agents along the evolution of a system.
Previous works in this field have  treated agents' observation power as a static feature. 
However, in many real life scenarios,  agents'
 observation power may change along computations. 

In this work we  introduced \ctlskd, a logic that 
can express such dynamic changes of  observation power, and  we
demonstrated how this can be used to
 express relevant properties in practical scenarios.
We  studied the model checking of \ctlskd over both mono- and multi-agent
systems and proved that in both cases, the ability to express observation
changes comes at no complexity cost, but strictly increases
expressivity. We also showed how to reduce the model-checking problem
for our logic to that of \ctlsk, removing the observation-change
operator. This provides a model-checking procedure for \ctlskd, but
given the complexity of the best-known model-checking algorithm for \ctlsk,
this procedure is not as efficient as the direct algorithm we provide.   

As future work, we plan to 
build upon the techniques developed here to investigate epistemic
extensions of strategic
logics with imperfect information.  Several such logics
have been defined and studied
recently~\cite{epistemicSL,BerthonMM17a,DBLP:conf/lics/BerthonMMRV17,BelardinelliLMR17b,BelardinelliLMR17a},
and~\cite{DBLP:conf/lics/BerthonMMRV17} in particular already presents the
feature of dynamic observation change via change of strategy. We believe
that the present work will help to establish new results on the model
checking of such logics.




\newpage
\appendix

\section{Proof of Theorem~\ref{theo-equivalence}}
\label{sec-app}

Theorem~\ref{theo-equivalence} directly follows from the following lemma:
\begin{lemma}
  \label{theo-extended}
  For every history formula $\phi$, model $M$, history $h$ and
  observation record $r$ that stops at $h$,
  \[h,r\models\phi \quad\mbox{iff}\quad
  \last(h),\IS(h,r),O(h,r)\modelsI \phi\] and
  for every path formula $\psi$, every path $\pi$,   point in time
  $n\in\setn$ and observation record
  $r$ that stops at $n$,
  \[\pi,n,r\models\psi\quad\mbox{iff}\quad
  \pi_{\geq n},I(\pi_{\leq n},r),O(\pi_{\leq n},r)\modelsI\psi.\]
\end{lemma}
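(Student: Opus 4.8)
The plan is to prove the two equivalences simultaneously by a single induction on the structure of the formula, since history formulas and path formulas are mutually recursive (a path formula may contain a history formula as an atom, and vice versa through $\A$ and $\K$). The key engine throughout will be Proposition~\ref{lem-UT}, which tells us exactly how the information set transforms under a transition ($\UT$) and under an observation change ($\UD$); the whole point of the alternative semantics is that its update functions were \emph{designed} to track $\IS(h,r)$, so most cases reduce to invoking that proposition and the induction hypothesis.

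I would organise the induction by cases on the outermost connective. The Boolean and atomic cases ($p$, $\neg$, $\wedge$) are immediate, because both semantics read the current state $\last(h)$ via $V$, and $\IS$ and $O$ do not change. The three interesting history-formula cases are $\K$, $\D{o}$, and $\A$. For $\K\varphi$: by the natural semantics $h,r\models\K\varphi$ iff $h',r\models\varphi$ for all $h'\eqh{r}h$; by definition of the information set these $h'$ are exactly the histories whose last state ranges over $\IS(h,r)$, and crucially they all share the same record $r$, hence the same $\IS$ and same $O$. So the quantification over indistinguishable histories collapses to a quantification over states $s'\in\IS(h,r)$, matching the alternative clause $\forall s'\in I,\ s',I,o\modelsI\varphi$; then apply the induction hypothesis. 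For $\D{o}\varphi$: unfold $h,r\models\D{o}\varphi$ to $h,r\cdot(o,|h|-1)\models\varphi$, apply Proposition~\ref{lem-UT} to rewrite $\IS(h,r\cdot(o,|h|-1))=\UD(\IS(h,r),\last(h),o)$, observe that $O(h,r\cdot(o,|h|-1))=o$, and finish by induction. For the path-quantifier and temporal cases I would pass through the second equivalence of the lemma, using that $\pi,|h|-1,r\models\psi$ is tied to $\pi_{\geq|h|-1}=\pi_{\geq n}$ with $n=|h|-1$.

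For the path-formula half, the atomic case $\pi,n,r\models\varphi$ folds back into the history half via $\pi_{\leq n},r\models\varphi$, so the induction hypothesis on $\varphi$ applies directly. The $\X$ and $\U$ cases are where one must be careful: advancing the point in time by one in the natural semantics corresponds, in the alternative semantics, to one application of $\UT$, and advancing by $m$ steps corresponds to $m$ iterated applications, i.e.\ to $\UT^m$. The clean way to see this is a small auxiliary claim, proved by induction on $m$ using Proposition~\ref{lem-UT}, that $\IS(\pi_{\leq n+m},r)=\UT^m(\IS(\pi_{\leq n},r),\pi,o)$ and likewise $O(\pi_{\leq n+m},r)=O(\pi_{\leq n},r)$ (the observation does not change along a run, since $r$ stops at $n$ and no new observation-change pairs appear). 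With this claim in hand, the $\U$ clause of the natural semantics, which quantifies over positions $m\geq n$, lines up exactly with the alternative $\U$ clause, which quantifies over the $\UT^m$-iterates.

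The main obstacle, and the place deserving the most care, is the bookkeeping of observation records in the temporal cases: one must verify that the record $r$, which by hypothesis stops at $n$, genuinely induces a \emph{constant} last observation $O$ and a \emph{deterministic} forward evolution of the information set along the run, so that $\obslist(r,i)$ contributes only the single observation $\lastobs$ at every $i\geq n$ and the update is purely $\UT$ with that fixed $o$. This is exactly what lets the iterated-update claim go through; once it is established, the remaining cases are mechanical. I would therefore isolate this constant-observation/iterated-$\UT$ fact as the first lemma and treat the rest as routine structural induction.
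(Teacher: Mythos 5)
Your proposal is correct and follows essentially the same route as the paper's proof: mutual structural induction, collapsing the quantification over $\eqh{r}$-equivalent histories to quantification over $\IS(h,r)$ in the $\K$ case (using that equivalent histories share the same information set and last observation), Proposition~\ref{lem-UT} for the $\D{o}$ and $\X$ cases, and an iterated-$\UT$ claim proved by induction on the number of steps for $\U$. The only (cosmetic) difference is that you isolate the iterated-update fact as a standalone lemma about information sets, whereas the paper states the analogous claim at the level of the satisfaction relation and proves it inline within the $\U$ case.
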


\def\iff{\quad\mbox{iff}\quad}

\begin{proof} We prove the result by mutual induction on $\phi$ and $\psi$.
Let $M$ be a model, $h$ a history and $r$ an observation record that
stops at $h$.

$\bm{\phi=p}$: We have
$h,r\models p\iff p\in V(\last(h))\iff
\last(h),\IS(h,r),O(h,r)\modelsI p$, by applying the definitions.

\vspace{1ex}
$\bm{\phi=\neg\varphi'}$ and $\bm{\phi=\phi_1\wedge\phi_2}$: Simple
application of the
induction hypothesis.

\vspace{1ex} $\bm{\phi=\A\psi}$: We start with the left-to-right
implication. Assume that $h,r\models\A\psi$.  We
need to prove that for all
$\pi$ such that $\pi_0=\last(h)$,
$\pi,I(h,r),O(h,r)\modelsI\psi$.  If $\pi_0=\last(h)$, we have $h\pref
h\cdot\pi_{>0}$; since $h,r\models\A\psi$ we thus
get that $h\cdot\pi_{>0},|h|-1,r\models\psi$.  By induction hypothesis
we have $(h\cdot\pi_{>0})_{>|h|-1},I((h\cdot\pi_{>0})_{\leq |h|-1}, r),
O((h\cdot\pi_{>0})_{\leq |h|-1},r)\modelsI\psi$, that is to say
$\pi,I(h,r),O(h,r)\modelsI\psi$, as $\pi_0=\last(h)$.

For the other direction, assume that
$\last(h),I(h,r),O(h,r)\modelsI\A\psi$, \ie,  for all $\pi$ such that
$\pi_0=\last(h)$, it holds that $\pi,I(h,r),O(h,r)\modelsI\psi$.
  Let $\pi$ be such that $h\pref\pi$; we prove that $\pi,|h|-1,r\models\psi$.
  Because $h\pref\pi$, we have that $(\pi_{\geq |h|-1})_0 =
  \last(h)$. By assumption, it follows that $\pi_{\geq |h|-1},I(h,r),O(h,r)\modelsI\psi$,
  that is $\pi_{\geq |h|-1},I(\pi_{\leq|h|-1},r),O(\pi_{\leq|h|-1},r)\modelsI\psi$.
  Then by induction hypothesis we finally get that $\pi,|h|-1,r\models\psi$.

  \vspace{1ex}
  $\bm{\phi=\K\phi':}$ For the first direction, assume that
  $h,r\models\K\phi'$, \ie, for all
  $h'\eqh{r}h$, we have $h',r\models\phi$.
We prove that $\last(h),I(h,r),O(h,r)\modelsI\K\phi'$, \ie, for all
$s'\in I(h,r)$, $s',I(h,r),O(h,r)\modelsI\phi'$.
Let $s'\in I(h,r)$. By definition of the information set, there exists $h'\eqh{r}h$ such that $h'\eqh{r}h$ and $\last(h')=s'$.
By assumption, $h',r\models\phi$, and by induction hypothesis, $\last(h'),I(h',r),O(h',r)\modelsI\phi$.
  Because $h'\eqh{r}h$, $I(h,r)=I(h',r)$ and $O(h,r)=O(h',r)$. It
  follows that $s',I(h,r),O(h,r)\modelsI\phi$.

For the other direction,  we assume that for all $s'\in I(h,r),
s',I(h,r),O(h,r)\modelsI\phi'$, and we prove that $h,r\models\K\phi'$,
\ie, for all $h'\eqh{r}h$, $h',r\models\phi'$.
  Let $h'$ be such that $h'\eqh{r}h$. By definition, $\last(h')\in I(h,r)$.
  By assymption, $\last(h'),I(h,r),O(h,r)\modelsI\phi'$. Because
  $h'\eqh{r}h$, $I(h,r)=I(h',r)$ and $O(h,r)=O(h',r)$, and we get that
  $\last(h'),I(h',r),O(h',r)\modelsI\phi'$. By induction hypothesis, $h',r\models\phi'$.

  \vspace{1ex} $\bm{\phi=\D{o}\phi'}$: By definition,
  $h,r\models\D{o}\phi\iff h,r.(o,|h|-1)\models\phi$. By induction
  hypothesis,
  $h,r.(o,|h|-1)\models\phi\iff
  \last(h),I(h,r.(o,|h|-1)),O(h,r.(o,|h|-1))\modelsI\phi$.  By
  Proposition~\ref{lem-UT}, because $r$ stops at $|h|-1$, this is
  equivalent to $\last(h),\UD(I(h,r),o),o\modelsI\phi$, which by
  definition is equivalent to $\last(h),I(h,r),O(h,r)\modelsI\D{o}\phi$.

  \vspace{1ex}
Let $\pi$ be a path, $n$ a natural number and $r$ an observation
record that stops at $n$.

\vspace{1ex}
$\bm{\psi=\phi}$: By definition, $\pi,n,r\models\phi\iff \pi_{\leq
  n},r\models\phi$. By induction hypothesis, the latter is equivalent
to $\last(\pi_{\leq n}),I(\pi_{\leq n},r),O(\pi_{\leq
  n},r)\modelsI\phi$. Because $(\pi_{\geq n})_0=\last(\pi_{\leq n})$,
by definition this is also equivalent to
 $\pi_{\geq n},I(\pi_{\leq n},r),O(\pi_{\leq n},r)\modelsI\phi$, which concludes.

  \vspace{1ex}
  $\bm{\psi=\neg\psi'}$ and $\bm{\psi=\psi_1\wedge\psi_2}$:
  By application of the induction hypothesis.

  \vspace{1ex} $\bm{\psi=\X\psi'}$: By definition,
  $\pi,n,r\models\X\psi'\iff \pi,n+1,r\models\psi'$. By induction
  hypothesis we get
  \begin{equation}
    \label{eq:a}
\pi,n,r\models\X\psi' \iff  \pi_{\geq(n+1)},I(\pi_{\leq(n+1)},r),O(\pi_{\leq(n+1)},r)\modelsI\psi'.    
  \end{equation}
  Because $r$ stops at $n$, we have that
  $O(\pi_{\leq(n+1)},r)=O(\pi_{\leq n},r)$.  Using
  Proposition~\ref{lem-UT}, we also have that
  $I(\pi_{\leq(n+1)},r)=\UT(I(\pi_{\leq n},r),\pi_{n+1},O(\pi_{\leq
    n},r))$. \eqref{eq:a} thus becomes
  \begin{equation}
    \label{eq:b}
\pi,n,r\models\X\psi' \iff
\pi_{\geq(n+1)},\UT(I(\pi_{\leq n},r),\pi_{n+1},O(\pi_{\leq
    n},r)),O(\pi_{\leq n},r)\modelsI\psi'.    
  \end{equation}
Because  $\pi_{(n+1)} = (\pi_{\geq n})_{1}$, by definition we finally
get that
  \[\pi,n,r\models\X\psi\iff \pi_{\geq n},I(\pi_{\leq n},r),O(\pi_{\leq
    n},r)\modelsI\X\psi'.\]

  \vspace{1ex}
  $\bm{\psi=\psi_1\U\psi_2}$: According to the
  definitions, it suffices to prove that for all path  $k\geq 0$ and $\psi\in\{\psi_1,\psi_2\}$,
  $\pi,n+k,r\models\psi\iff \pi_{\geq (n+k)},\UT^k(I(\pi_{\leq
    n},r),\pi_{\geq n},O(\pi_{\leq n},r)),O(\pi_{\leq
    n},r)\modelsI\psi$.  This can be proven by induction on $k$, using
for the inductive step  a reasoning similar to that for
$\psi=\X\psi'$.
  \end{proof}

\section{Proof of Proposition~\ref{lem-UTk}}  
\label{sec-app-b}

\setcounter{theorem}{\value{lem-UTk}}
\begin{proposition}
For every history $h\cdot s$, record tuple $\rvec$ that stops at $h$, observation tuple $\ovec$  and 
integer $k$, it holds that
\[
  \begin{array}{rcl}
  \KT{k}(h\cdot s,\rvec)&=&\UTK{k}(\KT{k}(h,\rvec),s,\ovec(h,\rvec)),\mbox{ and}\\
    \KT{k}(h,\append[\rvec]{(o,|h|-1)_a})&=&\UDK{k}(\KT{k}(h,\rvec),o,a).
  \end{array}
\]
\end{proposition}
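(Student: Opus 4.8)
The plan is to prove both equalities by induction on $k$, and moreover to carry out two \emph{separate} inductions, since the inductive step of each equality uses only the induction hypothesis for that same equality (the transition step never invokes the observation-change hypothesis, and vice versa). The base case $k=0$ is immediate in both cases: a $0$-tree records only its root, and by definition $\KT{0}(h\cdot s,\rvec)=\langle s,\emptyset,\ldots,\emptyset\rangle=\UTK{0}(\KT{0}(h,\rvec),s,\ovec(h,\rvec))$, while $\KT{0}(h,\append[\rvec]{(o,|h|-1)_{a}})=\KT{0}(h,\rvec)=\UDK{0}(\KT{0}(h,\rvec),o,a)$.

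For the transition equality at level $k+1$, I would unfold $\KT{k+1}(h\cdot s,\rvec)=\langle s,\forest_1,\ldots,\forest_m\rangle$ with $\forest_i=\{\KT{k}(h',\rvec)\mid h'\eqha[a_i]{\rvec}(h\cdot s)\}$, and unfold $\UTK{k+1}(\KT{k+1}(h,\rvec),s,\ovec(h,\rvec))$ to $\langle s,\forest'_1,\ldots,\forest'_m\rangle$. The crux is a decomposition fact for the perfect-recall relation: since $\rvec$ stops at $h$, no observation change occurs at the final time step $|h|$, so the only observation agent $a_i$ uses there is $\ovec(h,\rvec)_{a_i}=\lastovec(h,\rvec)_{a_i}$. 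Hence $h'\eqha[a_i]{\rvec}(h\cdot s)$ holds exactly when $h'=h''\cdot s''$ with $h''\eqha[a_i]{\rvec}h$, $\last(h'')\,T\,s''$ and $s''\eqstate{\ovec(h,\rvec)_{a_i}}s$. Applying the hypothesis $\KT{k}(h''\cdot s'',\rvec)=\UTK{k}(\KT{k}(h'',\rvec),s'',\ovec(h'',\rvec))$, and using that $\ovec(\cdot,\rvec)$ depends only on $\rvec$ and the length of its argument (so $\ovec(h'',\rvec)=\ovec(h,\rvec)$ because $|h''|=|h|$), both $\forest_i$ and $\forest'_i$ become the image of the same set of pairs $(h'',s'')$ under the same map, giving $\forest_i=\forest'_i$.

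For the observation-change equality at level $k+1$, write $\rvec'=\append[\rvec]{(o,|h|-1)_{a}}$ with $a=a_i$, and compare the forests agent by agent. The key fact is how appending $(o,|h|-1)$ to agent $a_i$'s record affects the relations: for $j\neq i$ the relation $\eqha[a_j]{\rvec'}$ equals $\eqha[a_j]{\rvec}$ (agent $a_j$'s record is untouched), whereas for $a_i$ the sole new constraint sits at the final time step, so $h'\eqha[a_i]{\rvec'}h$ iff $h'\eqha[a_i]{\rvec}h$ and $\last(h')\eqstate{o}\last(h)$. This mirrors the definition of $\UDK{k+1}$ exactly: the $j\neq i$ forests are passed through unfiltered, and agent $a_i$'s forest is filtered by $\racine(\KT{k})\eqstate{o}\last(h)$. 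Feeding in the hypothesis $\KT{k}(h',\rvec')=\UDK{k}(\KT{k}(h',\rvec),o,a_i)$ (valid because $|h'|=|h|$ makes the appended time stamp agree, so $\rvec'$ is indeed $\append[\rvec]{(o,|h'|-1)_{a}}$) then yields $\forest_j=\forest'_j$ for every $j$, using that $\racine(\KT{k}(h',\rvec))=\last(h')$ to translate the filter.

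The main obstacle is establishing the two facts about $\eqha[a_i]{\rvec}$ cleanly, and in particular justifying that $\ovec(h,\rvec)$ is determined by $|h|$ and $\rvec$ alone, so that it is shared by every history indistinguishable from $h$ even under a single agent's relation. Both rest on the inductive definition of the per-agent ``observations at time $n$'' lists together with the assumption that $\rvec$ stops at $h$, which guarantees that the final time step carries no pending change in the transition case and exactly one new change in the observation-change case. Once these are in place, the forest equalities follow by routine set-builder manipulation.
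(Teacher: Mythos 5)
Your proposal is correct and follows essentially the same route as the paper's own proof: induction on $k$, with the transition case resting on the decomposition $h'\cdot s'\eqha[a_i]{\rvec}h\cdot s$ iff $h'\eqha[a_i]{\rvec}h$ and $s'\eqstate{\ovec_i}s$ (valid because $\rvec$ stops at $h$), and the observation-change case resting on the fact that appending $(o,|h|-1)$ to agent $a$'s record leaves the other agents' relations untouched and only adds the final-state constraint $\last(h')\eqstate{o}\last(h)$ for agent $a$. Your observations that the two equalities admit independent inductions, and that the induction hypothesis applies to $h'$ because $|h'|=|h|$ makes the appended timestamp agree, are refinements the paper leaves implicit but do not change the argument.
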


\begin{proof}
  The proof is by induction on $k$.

  $\bm{k=0}$: the result is immediate
from the definitions.

$\bm{k+1}$: We start with the first part of the proposition.

By definition $\KT{k+1}(h\cdot s,\rvec)=\langle
  s,\forest_1,\ldots,\forest_m\rangle$,
where for each $i$, \[\forest_i=\{\KT{k}(h'\cdot s',\rvec) \mid h'\cdot s'
  \eqha[a_i]{\rvec} h\cdot s\}.\]
On the other hand, letting $\KT{k+1}(h,\rvec)=\langle
\last(h),\forest'_1,\ldots,\forest'_m\rangle$ and
$\ovec=\ovec(h,\rvec)$, we have by definition
that
$\UTK{k+1}(\KT{k+1}(h,\rvec),s,\ovec)=\langle s, \forest''_1,
                                  \ldots, \forest''_m\rangle$,
where for each $i$, \[\forest''_i\egdef \{\UTK{k}(\KT{k},s',\ovec)\mid
\KT{k}\in \forest'_i\mbox{, } s' \eqstate{\ovec_i} s \mbox{ and
}\racine(\KT{k})T s'\}.\]
It thus suffices to show that for each $i$, $\forest_i=\forest''_i$.
Observe that because $\rvec$ stops at $h$, $h'\cdot s'
  \eqha[a_i]{\rvec} h\cdot s$ iff $h'\eqha[a_i]{\rvec} h$ and
  $s'\eqstate{\ovec_i}s$. We thus have that
  \[\forest_i=\{\KT{k}(h'\cdot s',\rvec) \mid h' \eqha[a_i]{\rvec} h,\;s' \eqstate{\ovec_i} s \mbox{ and
    }\last(h')T s'\}.\]
  By induction hypothesis, this becomes
    \[\forest_i=\{\UTK{k}(\KT{k}(h',\rvec),s',\ovec(h',\rvec)) \mid h' \eqha[a_i]{\rvec} h,\;s' \eqstate{\ovec_i} s \mbox{ and
      }\last(h')T s'\}.\]
    For each $h' \eqha[a_i]{\rvec} h$, we have that
    $\ovec(h',\rvec)=\ovec(h,\rvec)=\ovec$.

    It remains to observe that by definition, for each $i$, $\forest'_i=\{\KT{k}(h',\rvec) \mid h'
  \eqha[a_i]{\rvec} h\}$, and we get that $\forest_i=\forest''_i$,
  which concludes the first part of the proposition. 

  \vspace{1ex}
  We now prove the second part of the proposition, assuming that $a=a_j$.

  By definition 
$\KT{k+1}(h,\append[\rvec]{(o,|h|-1)_{a_j}})=\langle
  \last(h),\forest_1,\ldots,\forest_m\rangle$,
where for each $i$, \[\forest_i=\{\KT{k}(h',\append[\rvec]{(o,|h|-1)_{a_j}}) \mid h'
  \eqha[a_i]{\append[\rvec]{(o,|h|-1)_{a_j}}} h\}.\]

On the other hand, letting
$\KT{k+1}(h,\rvec)=\langle
\last(h),\forest'_1,\ldots,\forest'_m\rangle$, we have by definition that
$\UDK{k+1}(\KT{k+1}(h,\rvec),o,a_j) \egdef
\langle \last(h), \forest''_1, \ldots,\forest''_m\rangle$, where for each
$i\neq j$,
\[\forest''_i=\{\UDK{k}(\KT{k},o,a_j)\mid \KT{k}\in\forest'_i\}, \mbox{
    and}\]
\[\forest''_j\egdef \{\UDK{k}(\KT{k},o,a_j)\mid
\KT{k}\in\forest'_j \mbox{ and } \racine(\KT{k})\eqstate{o} \last(h)\}.\]
We now show that for all $i$, $\forest_i=\forest''_i$.

First, by induction hypothesis, for each $i$ we have 
 \[\forest_i=\{\UDK{k}(\KT{k}(h',\rvec),o,a_j)\mid h'
  \eqha[a_i]{\append[\rvec]{(o,|h|-1)_{a_j}}} h\}.\]

Also, by definition, for each $i$ we have $\forest'_i=\{\KT{k}(h',\rvec) \mid h'
  \eqha[a_i]{\rvec} h\}$.

Next, for $i\neq j$ we have that $h'
  \eqha[a_i]{\append[\rvec]{(o,|h|-1)_{a_j}}} h$ iff $h'
  \eqha[a_i]{\rvec} h$, and by definition of the perfect-recall
  relation, for $i=j$, it holds that $h'
  \eqha[a_i]{\append[\rvec]{(o,|h|-1)_{a_j}}} h$ iff $h'
  \eqha[a_i]{\rvec} h$ and $\last(h')\eqstate{o}\last(h)$. 
It is then easy to see that for all $i$, $\forest_i=\forest''_i$,
which concludes.
\end{proof}


\end{document}